\documentclass[aip,jmp]{revtex4-1}
\usepackage[utf8]{inputenc}
\usepackage{amsfonts}
\usepackage{amsmath}
\usepackage{amssymb}
\usepackage{amsthm}
\usepackage{mathrsfs}
\usepackage{latexsym}
\usepackage{enumerate}
\usepackage{graphicx}

\date{April 12, 2018}

\newtheorem{theorem}{Theorem}[section]
\newtheorem*{theorem-non}{Theorem}

\newtheorem{lemma}[theorem]{Lemma}
\newtheorem{corollary}[theorem]{Corollary}
\newtheorem{proposition}[theorem]{Proposition}

\theoremstyle{definition}
\newtheorem{remark}[theorem]{Remark}

\newcommand{\ee}{\mathrm{e}}
\newcommand{\dd}{\mathrm{d}}
\newcommand{\R}{{\mathord{\mathbb R}}}
\newcommand{\Q}{{\mathord{\mathbb Q}}}

\newcommand{\N}{{\mathord{\mathbb N}}}

\newcommand{\dom}[1]{\mathrm{Dom}(#1)}

\newcommand{\vertiii}[1]{{\left\vert\kern-0.25ex\left\vert\kern-0.25ex\left\vert #1
    \right\vert\kern-0.25ex\right\vert\kern-0.25ex\right\vert}}

\DeclareMathOperator*{\esssup}{ess\,sup}
\DeclareMathOperator*{\essinf}{ess\,inf}

\begin{document}

\title[A geometric Iwatsuka type effect in quantum layers]{A geometric Iwatsuka type effect in quantum layers}

\author{Pavel Exner}
\affiliation{Doppler Institute for Mathematical Physics and Applied Mathematics, Prague, Czechia}
\affiliation{Department of Physics, Faculty of Nuclear Sciences and Physical Engineering, Czech Technical University in Prague, Prague, Czechia}
\affiliation{Nuclear Physics Institute, Czech Academy of Sciences, \v{R}e\v{z}, Czechia}
\email{exner@ujf.cas.cz}

\author{Tom\'a\v{s} Kalvoda}
\affiliation{Department of Applied Mathematics, Faculty of Information Technology, Czech Technical University in Prague, Prague, Czechia}
\email{tomas.kalvoda@fit.cvut.cz}

\author{Mat\v{e}j Tu\v{s}ek}
\affiliation{Department of Mathematics, Faculty of Nuclear Sciences and Physical Engineering, Czech Technical University in Prague, Prague, Czechia}
\email{tusekmat@fjfi.cvut.cz}

\begin{abstract}
\noindent We study motion of a charged particle confined to
Dirichlet layer of a fixed width placed into a homogeneous magnetic
field. If the layer is planar and the field is perpendicular to it
the spectrum consists of infinitely degenerate eigenvalues. We
consider translationally invariant geometric perturbations and
derive several sufficient conditions under which a magnetic
transport is possible, that is, the spectrum, in its entirety or a
part of it, becomes absolutely continuous.
\end{abstract}

\maketitle

%%%%%%%%%%%%%%%%%%%%%%%%%%%%%%%%%%%%%%
\section{Introduction} \label{s:intro}

A homogeneous magnetic field acting on charged particles has a
localizing effect, both classically and quantum mechanically. Since
numerous physical effects are based on moving electrons between
different places, mechanisms that can produce transport in the
presence of a magnetic field are of great interest. They typically
require presence of an infinitely extended perturbation, a standard
example being a barrier or a potential wall producing edge states,
cf. Ref.~\onlinecite{DoGeRa_11, DoHiSo_14, GeSe_97, HiPoRaSu_16, HiSo_08a, HiSo_08, HiSo_15} and references therein. This
is not the only possibility, though. In his classical paper
\cite{Iw_85} Iwatsuka demonstrated that a transport can be induced
by a modification of the magnetic field itself under the assumption
of a translational invariance, see also Ref.~\onlinecite{CFKS}. This is what we call
the Iwatsuka effect.

The aim of the present paper is to show still another mechanism which
can produce transport in a homogeneous field for particles confined
to a layer with hard walls. As in the case of the Iwatsuka model we
will express the effect in spectral terms seeking perturbations that
change the Hamiltonian spectrum to absolutely continuous. Our departing point is a flat layer of width $2a$ to which a charged particle is confined and which is exposed to the homogeneous magnetic field perpendicular to the layer plane. The spectrum of such a system is easily found by separation of variables. It combines the Landau levels with the Dirichlet Laplacian eigenvalues in the perpendicular direction, and needless to say that all the resulting eigenvalues are infinitely degenerate, see Sec.~\ref{sss: flatlayer} below for more details.

We are going to discuss \emph{geometric perturbations} of such a system, in particular, deformations of the layer which are invariant with respect to translation in a fixed direction. Such layers can be described, e.g., as a set of points satisfying $\mathrm{dist}(x,\Sigma)<a$ where $\Sigma$ is a surface obtained by shifting a smooth curve which can be parametrized by relation \eqref{surface} below. We are going to derive several conditions which ensure that the unperturbed pure point spectrum will change into an absolutely continuous one. Let us stress that this effect is purely geometrical and has not been described so far. As a tribute to the original work of Iwatsuka, we decided to call it the geometric Iwatsuka type effect. Our main results can be summarized in the following assertion.

 % ------------- %
\begin{theorem}
Let $H$ be the Hamiltonian of a charged quantum particle confined to a layer $\Omega$ of a constant width $2a$ in $\R^3$ built over a $C^4$-smooth, translationally invariant surface \eqref{surface}, $(s,y)\mapsto(x(s),y,z(s))$, and exposed to a nonzero homogeneous magnetic field pointing in the $z$-direction. The spectrum of $H$ is purely absolutely continuous if together with technical assumptions \eqref{eq:curv_bound} and \eqref{eq:diff_cond} any of the following conditions is satisfied:
\vspace{-.3em}
 % ------------- %
\begin{enumerate}[(i)]
\setlength{\itemsep}{-3pt}
 \item $\Omega$ is a one-sided-fold layer, $\lim_{s\to\pm\infty}x(s)=+\infty$ or $\lim_{s\to\pm\infty}x(s)=-\infty$. Furthermore, we suppose that the second part of \eqref{eq:V_bound} is fulfilled. (See Fig.~\ref{fig:one_sided_fold} for an example.)
 % ------------- %
 \item $\Omega$ is bent and asymptotically flat, $\dot{x}(s)=\alpha_{\pm}$
for all large enough positive and negative $s$, respectively, where
$\alpha_\pm\in(0,1]$. Furthermore, one requires that $\alpha_{+}\neq\alpha_{-}$ and
the halfwidth $a$ satisfies the bound described in
Lemma~\ref{lem:ev_diff} and Remark~\ref{rem:a_crit_bound} below.
(See Fig.~\ref{fig:asy_flat} for examples.)
\end{enumerate}
 % ------------- %

\vspace{-.6em}

\noindent Moreover, for a fixed $E\in\R$, the spectrum of $H$ below
$E+\left(\frac{\pi}{2a}\right)^2$ is absolutely continuous if
$\Omega$ is thin, i.e. the halfwidth $a$ is sufficiently small, and
the generating surface satisfies additional conditions specified in
Proposition~\ref{prop:thin}.
\end{theorem}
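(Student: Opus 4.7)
The plan is to exploit the translational invariance of the layer. After passing to curvilinear coordinates $(s,t,u)$ adapted to $\Sigma$ (arclength along the generating curve, invariance direction, transverse variable) and performing a partial Fourier transform in the invariance variable, the Hamiltonian $H$ decomposes into a direct integral $H\cong\int_{\R}^\oplus H(k)\,\dd k$ of fiber operators acting on the two-dimensional cross section. Each $H(k)$ is a self-adjoint magnetic Schr\"odinger operator on $\R\times(-a,a)$ with Dirichlet boundary conditions that depends analytically on $k\in\R$ as a type-(A) family; hence its discrete eigenvalues form real-analytic band functions $E_n(k)$. By the Thomas--Filonov--Sobolev criterion, absolute continuity of $\sigma(H)$ on a given spectral window reduces to showing that no band function intersecting that window is constant on an interval.

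For part (i), with $x(s)\to +\infty$ (or $-\infty$) at both ends, I would study the effective one-variable structure of $H(k)$: after a suitable gauge, the minimal coupling produces a $k$-dependent term proportional to $kx(s)$ in the scalar part of $H(k)$, which for large $|k|$ creates a well whose bottom is shifted to infinity by the coercivity of $x(s)$. Consequently $\inf\sigma(H(k))\to +\infty$ as $|k|\to\infty$, and in fact $E_n(k)\to+\infty$ for every $n$, forcing analytic non-constancy of every band. The conditions \ref{eq:curv_bound}, \ref{eq:diff_cond}, together with the second part of \ref{eq:V_bound}, are used to dominate the extra geometric potentials stemming from the Jacobian and the curvature so that the analytic perturbation argument applies uniformly in $n$.

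For part (ii), the layer is asymptotically flat with distinct slopes $\alpha_\pm$, so at $s\to\pm\infty$ the fiber $H(k)$ is asymptotic to two reference operators $H_\pm(k)$ modelling tilted flat strips of halfwidth $a$ in the homogeneous field. The spectra of $H_\pm(k)$ are explicit Landau--Dirichlet ladders depending on the effective field strengths $\alpha_\pm B$; since $\alpha_+\ne\alpha_-$ these ladders do not coincide, and $\sigma_\mathrm{ess}(H(k))$ is their union, with different $k$-dependence on the two sides. Lemma~\ref{lem:ev_diff} and Remark~\ref{rem:a_crit_bound} quantify, under the halfwidth restriction, that consecutive levels of these two asymptotic spectra remain separated, so a Feynman--Hellmann computation yields distinct $k$-derivatives of $E_n(k)$ at $\pm\infty$ and rules out flatness of any band.

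Finally, the thin-layer assertion follows by a squeeze reduction in the spirit of Duclos--Exner--Krej\v{c}i\v{r}\'{i}k: after rescaling $u=av$ and subtracting the transverse ground state energy $(\pi/2a)^2$, the resolvent of $H$ converges as $a\to 0$, uniformly on the spectral window $(-\infty,E]$, to that of an effective two-dimensional magnetic Schr\"odinger operator on the surface with a geometric potential built from the principal curvature of $\Sigma$. The hypotheses of Proposition~\ref{prop:thin} place this effective operator into an Iwatsuka-type regime with purely absolutely continuous spectrum, and this property transfers back to $H$ on the chosen window. The main obstacle throughout is case (ii): proving non-constancy of \emph{every} band rather than just the ground band requires the delicate separation of Dirichlet--Landau levels asserted in Lemma~\ref{lem:ev_diff}, which is precisely why the halfwidth constraint must be imposed.
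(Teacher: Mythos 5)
Your overall framework is the one the paper uses: direct integral over the momentum $\xi$ in the invariance direction, analyticity of the fibers (the paper establishes a type (B), not type (A), family, since everything is defined via quadratic forms and the negative part of the curvature-induced potential must be form-bounded), compact resolvent of each fiber, and non-constancy of the band functions, concluding via the Filonov--Sobolev/Reed--Simon machinery. Parts (i) and the thin-layer claim are essentially on track: for (i) the paper simply bounds $H[\xi]$ from below by $\xi^2-2\xi\|(\tilde A_2)_-\|_\infty-\|V_-\|_\infty\to+\infty$ as $\xi\to+\infty$ (one side suffices), and for thin layers it uses the norm-resolvent convergence to the effective operator $h_{\mathrm{eff}}$ with potential $-\kappa^2/4$, uniform in $\xi$, plus a tubular-neighborhood argument to transfer non-constancy back to $H$ below the threshold $E+a^{-2}E_1$ --- a step you gloss over but correctly identify.

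Part (ii) contains a genuine gap. First, the statement that $\sigma_{\mathrm{ess}}(H(k))$ is the union of the two asymptotic ladders is false: each fiber has compact resolvent (this is point (b) of the framework, proved by comparison with a harmonic oscillator), so its essential spectrum is empty; the asymptotic operators $H_{\alpha_\pm}$ likewise have purely discrete spectrum. Second, the mechanism you propose --- ``a Feynman--Hellmann computation yields distinct $k$-derivatives of $E_n(k)$ at $\pm\infty$'' --- does not rule out a constant band: a constant band has zero derivative everywhere, and nothing in your argument shows that either asymptotic derivative is nonzero. The paper's actual argument is different: it proves \emph{norm-resolvent convergence} of the translated fibers $H_\pm[\xi]$ to the reference operators $H_{\alpha_\pm}$ as $\xi\to\mp\infty$ (Proposition~\ref{prop:res_conv}, resting on the two-sided operator squeeze of Lemma~\ref{lem:top_eq}), so that the eigenvalue branches converge to the eigenvalues $\sigma_m(\alpha_+,B_0)$ and $\sigma_m(\alpha_-,B_0)$ at the two ends. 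A constant branch is necessarily isolated from the rest of the spectrum, hence carries a well-defined index $m$, and would have to satisfy $\sigma_m(\alpha_+,B_0)=\sigma_m(\alpha_-,B_0)$; Lemma~\ref{lem:ev_diff} shows that for $a$ below the stated threshold one has the strict inequality $\sigma_m(\alpha_+,B_0)<\sigma_m(\alpha_-,B_0)$ for all $m$, which gives the contradiction. Without the resolvent-convergence step and the identification of the limiting eigenvalues with matching index, your argument for (ii) does not close.
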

 % ------------- %

\begin{figure}[htb]
\centering
  \includegraphics[width=6cm]{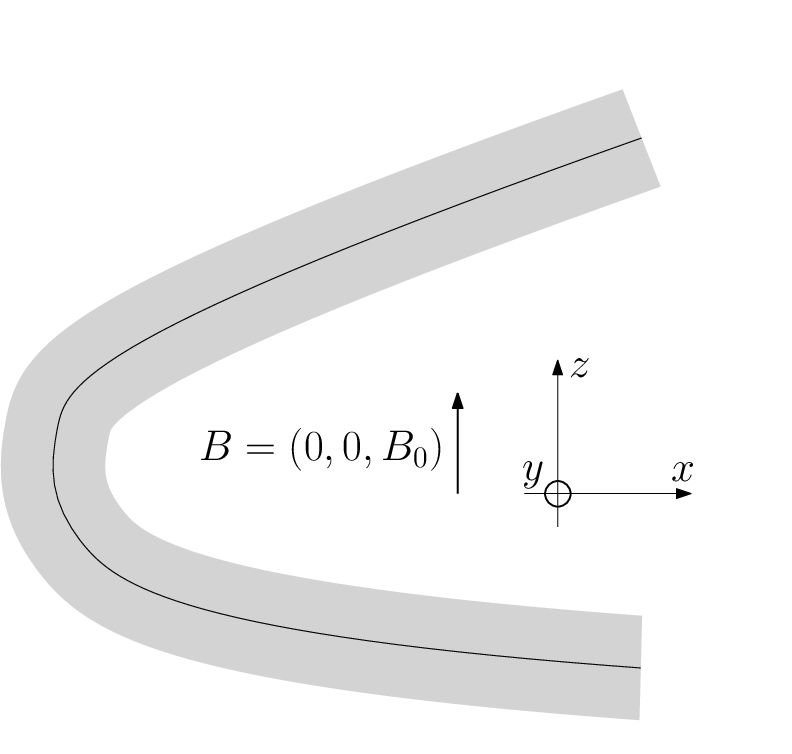}
  \caption{Example of a profile of a one-sided-fold layer.}
  \label{fig:one_sided_fold}
\end{figure}  

\begin{figure}[ht]
\centering
  \includegraphics[width=9cm]{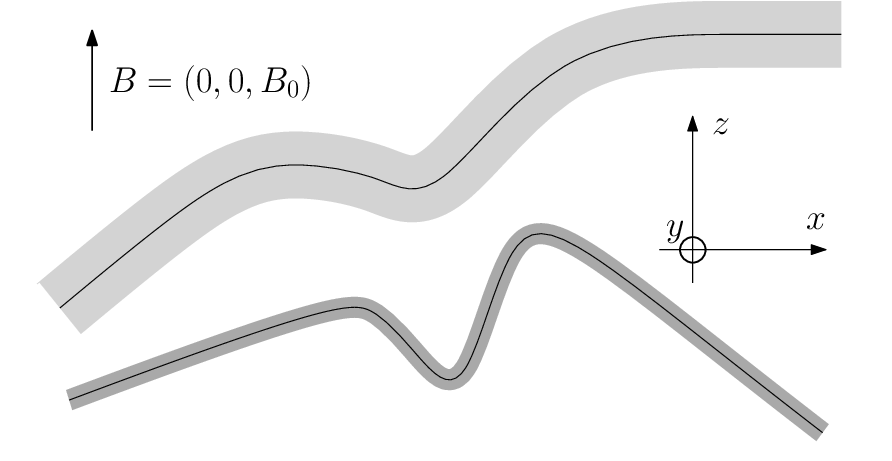}
  \caption{Examples of profiles of asymptotically flat layers. The larger is difference of \emph{magnitudes} of asymptotic slopes at $x=\pm\infty$ the thicker layer may be considered.}
  \label{fig:asy_flat}
\end{figure}
 
The proof of the theorem will be given in Sec.~\ref{s:classes}, before coming to it we will describe the geometry of the layer and explain the main steps of the argument. Let us add a few remarks. First of all, in Sec.~\ref{sss: flatlayer} we demonstrate that the perturbation must be geometrically nontrivial, because a mere tilt of the layer with respect to the field direction is not enough, with one notable exception. Furthermore, except the claim (i) our condition impose restrictions on the layer thickness. On the other hand, the thinner the layer, the more general deformations we can treat. In particular, the last claim covers perturbations which are compact with respect to the $x$ variable, cf. Proposition \ref{prop:thin}. Note also that the shift by $\left(\frac{\pi}{2a}\right)^2$ in the last claim is needed; without it the claim would be trivial because in a thin layer the spectral threshold is pushed up due to the Dirichlet boundaries.

The method used to treat thin layers is also useful with respect to the original Iwatsuka model and its generalization including addition of a potential perturbation. Recall, in particular, the conjecture stated in Ref.~\onlinecite{CFKS} according to which \emph{any} nontrivial translationally invariant magnetic perturbation gives rise to the purely absolutely continuous spectrum. Despite a number of sufficient conditions derived after the original Iwatsuka paper \cite{MaPu_97, ExKo_00, Tu_16} to which we add a new one in Theorem~\ref{theo:iwatsuka}, the question in its generality remains open. In a similar vein, we are convinced that the sufficient conditions we find in this paper are by far not necessary. Particularly, any new result on the original two-dimensional Iwatsuka Hamiltonian with an additional translationally invariant scalar potential may give rise to a new sufficient condition on absolute continuity of the Hamiltonian on a thin layer, cf. Remark \ref{rem:eff}.

%%%%%%%%%%%%%%%%%%%%%%%%%%%%%%%%%%%%%%%%
\section{Preliminaries} \label{s:prelim}

\subsection{Geometry of the layer}

Let $\Sigma$ be a surface in $\R^3$ invariant with respect to
translation in the $y$ direction and described by means of the
following parametrization,
 % ------------- %
\begin{equation} \label{surface}
\mathscr{L}_{0}(s,y)=(x(s),y,z(s))
\end{equation}
 % ------------- %
with $s,y\in\R$. The functions $x$ and $z$ here are assumed to be smooth enough, unless said otherwise we suppose they are $C^{4}$, and  such that 
\begin{equation} \label{eq:param}
\dot x(s)^2+\dot z(s)^2=1,
\end{equation}
where the dot stands for the derivative with respect to $s$. The last condition means that the curve $\Gamma:s\mapsto(x(s),z(s))$ in the $xz$ plane is parametrized by its arc length measured from some reference point on the curve. Therefore the (signed) curvature $\kappa$ of $\Gamma$ is given by
 % ------------- %
$$
\kappa(s)=\dot x(s)\ddot z(s)-\ddot x(s)\dot z(s)
$$
 % ------------- %
and the corresponding unit normal vector to $\Sigma$ is
 % ------------- %
$$
n(s,y)\equiv n(s)=(-\dot z(s),0,\dot x(s)).
$$
 % ------------- %
Let us stress that throughout the paper we assume that
% --------------- %
\begin{equation}\label{eq:curv_bound}
 \|\kappa\|_\infty<\infty.
\end{equation}
%---------------%
If we regard $\Sigma$ as a Riemannian manifold, then the metric induced by the immersion $\mathscr{L}_{0}$ is
 % ------------- %
\begin{equation*}
 (g_{\mu\nu})=\begin{pmatrix}
               1 & 0\\
               0 & 1
              \end{pmatrix}; \quad \mu,\nu\in\{s,y\}.
\end{equation*}
 % ------------- %
Let $a>0$ and $I:=(-1,1)$. We define the layer $\Omega$ of width $2a$ built over the surface $\Sigma$ as the image of
 % ------------- %
$$
\mathscr{L}:\R^2\times I\to\R^3:\ \left\{ (s,y,u)\mapsto \mathscr{L}_{0}(s,y)+a u n(s)\right\},
$$
 % ------------- %
see Fig.~\ref{fig:layer}.
\begin{figure}[h] 
\centering
\includegraphics[scale=0.7, trim=0 1cm 0 0]{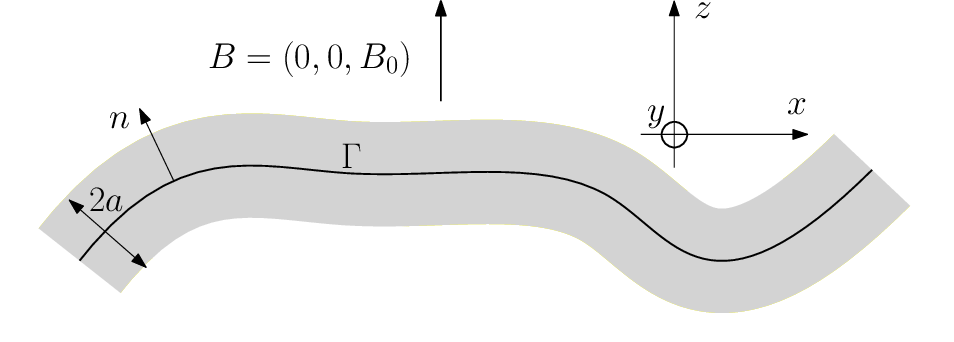} 
\caption{A piece of profile of a layer $\Omega$. The both ends extend infinitely.} \label{fig:layer}
\end{figure}
We will always assume that
 % ------------- %
\begin{equation}\label{eq:diff_cond}
 a<\varrho_{m}:=\|\kappa\|_{\infty}^{-1}\quad\text{and }\Omega\text{ does not intersect itself}.
\end{equation}
 % ------------- %
Under these conditions, $\mathscr{L}$ is a diffeomorphism onto $\Omega$ as one can see, e.g., from the formula for the metric $G$ on $\Omega$, induced by the immersion $\mathscr{L}$, that reads
 % ------------- %
\begin{equation*}
 (G_{ij})=\begin{pmatrix}
           (G_{\mu\nu})& 0\\
           0 & a^2
          \end{pmatrix},\quad
 (G_{\mu\nu})=\begin{pmatrix}
               f_a(s,u)^2 & 0\\
               0 & 1
              \end{pmatrix};\quad i,j\in\{s,y,u\},
\end{equation*}
 % ------------- %
where $f_a(s,u):=1-a u\kappa(s)$. The assumption \eqref{eq:diff_cond}
implies, in particular, that $f_a(s,u) > 1 - a\|\kappa\|_\infty > 0$
holds for all $(s,u) \in \mathbb{R}\times I$.
 % ------------- %
\begin{remark}
Note that one can use $v=au \in (-a,a)$ as a natural transverse variable. The choice we made is suitable in situations when we want to discuss asymptotic properties of thin layers.
\end{remark}

%%%%%%%%%%%%%%%%%%%%%%%%%%%%%%%%%%%%%%%%%
\subsection{Dirichlet magnetic Laplacian}

The main object of our interest is the magnetic Laplacian on $\Omega$ subject to the Dirichlet boundary condition,
 % ------------- %
$$
-\Delta_{D,A}^{\Omega}=(-i\nabla+A)^2\text{ (in the form sense)},\quad Q(-\Delta_{D,A}^{\Omega})=\mathcal{H}_{A,0}^{1}(\Omega,\dd x\dd y\dd z),
$$
 % ------------- %
with a special choice of the vector potential, $A=(0,B_{0} x,0),\ B_{0}>0$, that corresponds to the homogeneous magnetic field $B=(0,0,B_{0})$. Using the unitary transform $\tilde U:\, L^2(\Omega,\dd x\dd y\dd z)\to L^2(\R^2\times I,\dd \Omega)$, $\,\psi\mapsto\psi\circ\mathscr{L}$, we may identify $-\Delta_{D,A}^{\Omega}$ with the self-adjoint operator $\hat{H}$ defined, in the form sense, on $L^{2}(\R^2\times I,\dd\Omega)$ by
 % ------------- %
\begin{align*}
\hat{H} = &-f_a(s,u)^{-1}\partial_{s}f_a(s,u)^{-1}\partial_{s}+(-i\partial_{y}+\tilde{A}_{2}(s,u))^2 \\
&-a^{-2}f_a(s,u)^{-1}\partial_{u}f_a(s,u)\partial_{u},
\end{align*}
 % ------------- %
where $\tilde{A}=(D\mathscr{L})^T A\circ\mathscr{L}=(0,\tilde{A}_{2},0)$ with
 % ------------- %
$$
\tilde{A}_{2}(s,u)=B_{0}\big(x(s)-a u\dot z(s)\big).
$$
 % ------------- %
By another unitary transform, $U:L^{2}(\R^2\times I,\dd\Omega)\to L^{2}(\R^2\times I,\dd\Sigma\dd u)$, $\,\psi\mapsto a^{1/2}f_{a}^{1/2}\psi$, we pass to the unitarily equivalent operator defined, again in the form sense, as
 % ------------- %
\begin{equation*}
 \tilde{H}=U\hat{H}U^{-1}=-\partial_{s}f_a(s,u)^{-2}\partial_{s}+(-i\partial_{y}+\tilde{A}_{2}(s,u))^2-a^{-2}\partial_{u}^{2}+V(s,u),
\end{equation*}
 % ------------- %
where
 % ------------- %
$$
V(s,u)=-\frac{1}{4}\frac{\kappa(s)^2}{f_a(s,u)^2}-\frac{1}{2}\frac{a u\ddot\kappa(s)}{f_a(s,u)^3}-\frac{5}{4}\frac{a^2 u^2\dot\kappa(s)^2}{f_a(s,u)^4}.
$$
 % ------------- %
These formul{\ae} are easy to derive, see e.g. Ref.~\onlinecite{DuExKr_01,KrRaTu_15} or Ref.~\onlinecite[Sec.~1.1]{EK}. Remark that we needed $C^4$--smoothness of $\Sigma$ to write $V$ down as an operator. Nevertheless,  $\tilde H$ may be introduced via its quadratic form for any $C^3$ surface.

The translational invariance of $\Omega$ makes it possible to pass finally to still another unitarily equivalent form of the operator by means of the
Fourier--Plancherel transform in the $y$ variable,
 % ------------- %
$$
(\mathscr{F}_{y\to\xi}\psi)(s,\xi,u)=(2\pi)^{-1/2}\int_{\R}\mathrm{e}^{-iy\xi}\,\psi(s,y,u)\,\dd y,
$$
 % ------------- %
which yields
 % ------------- %
\begin{equation*}
H:=\mathscr{F}_{y\to\xi}\tilde{H}\mathscr{F}_{y\to\xi}^{-1}=-\partial_{s}f_a(s,u)^{-2}\partial_{s}+(\xi+\tilde{A}_{2}(s,u))^2-a^{-2}\partial_{u}^{2}+V(s,u).
\end{equation*}
 % ------------- %
For a fixed $\xi\in\R$, we define
 % ------------- %
\begin{equation}\label{eq:part_hamiltonian}
 H[\xi]:=-\partial_{s}f_a(s,u)^{-2}\partial_{s}+(\xi+\tilde{A}_{2}(s,u))^2-a^{-2}\partial_{u}^{2}+V(s,u),
\end{equation}
 % ------------- %
which allows us to write our Hamiltonian in the form of a direct integral,
 % ------------- %
\begin{equation} \label{eq:di_decomp}
 H=\int^\oplus_\R H[\xi]\,\dd\xi,
\end{equation}
 % ------------- %
where $\xi$ is the momentum of the motion in the $y$ direction. Note that since $\dd\Sigma=\dd s\wedge\dd y$, the operator $H$ and its fiber $H[\xi]$ act in $L^{2}(\R^2\times I,\dd s\dd\xi\dd u)$ and $L^{2}(\R\times I,\dd s\dd u)$, respectively.

%%%%%%%%%%%%%%%%%%%%%%%%%%%%%%%%%%%%%%%%%%%%%%%%%%%%%%%%%%%%%%%%%%%%
\section{Absolute continuity of the magnetic Laplacian} \label{s:ac}

As mentioned in the introduction we are interested in situations when the confinement causes a magnetic transport manifested through the absolute continuity of the spectrum. Our aim is to describe several classes of layers $\Omega$ for which the spectrum $-\Delta_{D,A}^{\Omega}$ is purely absolutely continuous. Since this operator is unitarily equivalent to the above described $H$ which in turn decomposes into a direct integral with fibers $H[\xi]$, by Ref.~\onlinecite[Thm XIII.86]{RS4}, it is sufficient to prove that
 % ------------- %
\begin{enumerate}[(a)]
 \item the family $\{H[\xi] \mid \xi\in\R\}$ is analytic with respect to $\xi$ in the sense of Kato, \label{suff_1}
 \item the resolvent of $H[\xi]$ is compact for any $\xi\in\R$, \label{suff_2}
 \item no eigenvalue branch of $H[\xi]$ is constant in the variable $\xi$. \label{suff_3}
\end{enumerate}
 % ------------- %
In the following subsections, we address consecutively each of these points.
 % ------------- %
\begin{remark}\label{rem:spec_prop}
In fact, to see that \eqref{suff_1}--\eqref{suff_3} are sufficient conditions for the absolute continuity of $H$, we have to modify Theorem XIII.86 in Ref.~\onlinecite{RS4} slightly. In particular, the variable $\xi$ in the direct integral \eqref{eq:di_decomp} runs over a non-compact interval and the eigenvalue branches are not necessarily bounded. An alternative short proof is based on the main result of Ref.~\onlinecite{FiSo_06} which says that under \eqref{suff_1} and \eqref{suff_2}, $\sigma_{sc}(H)=\emptyset$ and $\sigma_{p}(H)$ may consist of isolated points without finite accumulation point only, and moreover, each one of these eigenvalues is of infinite multiplicity.

Since there are no finite accumulation points in $\sigma_{p}(H)$ and $\sigma_{sc}(H)=\emptyset$, we have $\sigma(H)=\sigma_{p}(H)\cup\sigma_{ac}(H)$, and consequently, it remains to check that $\sigma_{p}(H)=\emptyset$. By Ref.~\onlinecite[Thm XIII.85]{RS4}, $\lambda$ is an eigenvalue of $H$ if and only if $\mathrm{meas}_1\{\xi\in\R|\, \lambda\in\sigma_{p}(H[\xi])\}>0$, where $\mathrm{meas}_1$ stands for the one-dimensional Lebesgue measure. Since $H[\xi]$ has compact resolvent by \eqref{suff_2} and it is unbounded but lower bounded, its spectrum consists of a sequence of eigenvalues of finite multiplicity which can accumulate only at $+\infty$. This together with \eqref{suff_1} implies that there are countably many eigenvalue branches. Consequently, $\lambda\in\sigma_{p}(H)$ if and only if there is an eigenvalue branch, say $\lambda_p[\xi]$, such that $\mathrm{meas}_1\{\xi\in\R|\, \lambda_p[\xi]=\lambda\}>0$. By \eqref{suff_1}, the function $\lambda_p[\cdot]-\lambda$ is real analytic, and by \eqref{suff_3}, 
it is non-constant.
 This means that the equation $\lambda_p[\xi]=\lambda$ may have at most countable number of solutions and proves thus the claim.
\end{remark}

%%%%%%%%%%%%%%%%%%%%%%%%%%%%%%%%%
\subsection{Analyticity in $\xi$} \label{sec:analyticity}

For any $\xi_0\in\R$, we have in the form sense
 % ------------- %
$$
H[\xi]=H[\xi_0]\dotplus p_{\xi},$$
 % ------------- %
where the quadratic form $p_{\xi}$ is given by
 % ------------- %
\begin{equation}\label{eq:xi_pert}
p_{\xi}(\psi)=(\xi-\xi_0)^2 \|\psi\|^2+2(\xi-\xi_0)\langle\psi,(\xi_0+\tilde{A}_{2})\psi\rangle.
\end{equation}
 % ------------- %
For any $\delta>0$, one easily gets from here
 % ------------- %
\begin{equation*}
\begin{split}
 |p_{\xi}(\psi)|&\leq (\xi-\xi_0)^2 (1 +\delta^{-1})\|\psi\|^2+\delta\|(\xi_0+\tilde{A}_{2})\psi\|^2\\
 &\leq (\xi-\xi_0)^2 (1+\delta^{-1})\|\psi\|^2 +\delta\langle\psi,H[\xi_0]\psi\rangle+\delta\langle\psi,V_{-}\psi\rangle,
\end{split}
\end{equation*}
 % ------------- %
where $V_{-}$ stands for the negative part of $V =: V_{+} - V_{-}$. If we assume that
 % ------------- %
\begin{equation}\label{eq:neg_part_assump}
 V_{-}\text{ is relatively form bounded by } H[0],
\end{equation}
 % ------------- %
which is equivalent to the assumption that $V_{-}$ is relatively form bounded by $H[\xi_0]$, then  $H[\xi]$ is lower bounded and $p_{\xi}$ is infinitesimally form bounded by $H[\xi_0]$. In combination with (\ref{eq:xi_pert}) this implies that $H[\xi]$ forms an analytic family of type (B), in particular, that $H[\xi]$ is an analytic family in the sense of Kato~\cite{kato}.

%%%%%%%%%%%%%%%%%%%%%%%%%%%%%%%%%%%%%%%%%
\subsection{Compactness of the resolvent}\label{sec:comp}

Assume now, in addition, that
 % ------------- %
\begin{equation}\label{eq:V_bound}
\big|\!\lim_{s\to\pm\infty}x(s)\big|=+\infty,\quad V_{-}\in L^{\infty}(\R\times I,\dd s\dd u).
\end{equation}
 % ------------- %
Remark that under the second assumption, \eqref{eq:neg_part_assump} is trivially satisfied. 
By Ref.~\onlinecite[Thm XIII.64]{RS4}, the fiber $H[\xi]$ has compact resolvent if and only if the set
 % ------------- %
$$
\mathscr{C}_{H[\xi],b}:=\{ \psi\in Q(H[\xi]) \mid \|\psi\|\leq 1,\, \langle\psi, H[\xi]\psi\rangle\leq b\}
$$
 % ------------- %
is compact for all $b$. By \eqref{eq:diff_cond} we have $f_a(s,u)\leq 1+a\|\kappa\|_\infty=:d$, and therefore
 % ------------- %
$$
H[\xi]\geq -d^{-2}\partial_{s}^{2}+(\xi+\tilde{A}_2)^2-a^{-2}\partial_{u}^{2}-\|V_{-}\|_\infty.
$$
 % ------------- %
Moreover, there exists clearly an $\tilde s$ such that for all $s:\,|s|\geq \tilde s$, we have
 % ------------- %
$$
(\xi+\tilde{A}_2)^2\geq \left(\xi+\frac{B_0}{2}x(s)\right)^2.
$$
 % ------------- %
If we introduce the constant
 % ------------- %
$$
K:=\sup_{|s|<\tilde s,\, u\in I}\left|(\xi+\tilde{A}_2(s,u))^2-\left(\xi+\frac{B_0}{2}x(s)\right)^2\right|,
$$
 % ------------- %
then we can estimate $H[\xi]$ from below as follows,
 % ------------- %
$$
H[\xi]\geq -d^{-2}\partial_{s}^{2}+\left(\xi+\frac{B_0}{2}x(s)\right)^2-a^{-2}\partial_{u}^{2}-K-\|V_{-}\|_\infty=:H_{-}-K-\|V_{-}\|_\infty.
$$
 % ------------- %
The operator $H_{-}$ on the right-hand side decomposes into a sum of a one-dimensional operator which has compact resolvent due to \eqref{eq:V_bound}, cf. Ref.~\onlinecite[Thm XIII.67]{RS4}, and the Dirichlet Laplacian on $I$. Hence, according to Ref.~\onlinecite[Thm XIII.64]{RS4}, it has compact resolvent too. Consequently, $\mathscr{C}_{H_{-},b}$ is compact for any $b$. Clearly, $\mathscr{C}_{H[\xi],b}\subset\mathscr{C}_{H_{-},b+K+\|V_{-}\|_\infty}$, which means that $\mathscr{C}_{H[\xi],b}$ has to be precompact, but at the same time $\mathscr{C}_{H[\xi],b}$ is closed, cf. the proof of Theorem XIII.64 in Ref.~\onlinecite{RS4}, and thus compact.

%%%%%%%%%%%%%%%%%%%%%%%%%%%%%%%%%%%%%%%%%%%%%
\subsection{Non-constancy of the eigenvalues}

In the following section, we will study this property for several special but still rather wide classes of layers indicated in the introduction. Specifically, we will be concerned with the following cases:
 % ------------- %
\begin{enumerate}[(i)]
 \item \emph{a one-sided-fold layer:} $\lim_{s\to\pm\infty}x(s)=+\infty$ or $\lim_{s\to\pm\infty}x(s)=-\infty$,
 % ------------- %

 \item \emph{a bent, asymptotically flat layer:} $\dot{x}(s)=\alpha_+$ for all large enough positive $s$ and $\dot{x}(s)=\alpha_-$ for all large enough negative $s$, where $\alpha_\pm\in(0,1],\, \alpha_+\neq\alpha_-$,
 % ------------- %
 \item \emph{a thin non-planar layer:} $a$ is sufficiently small.
\end{enumerate}
 % ------------- %
Let us remark that there are basically two methods how to demonstrate non-constancy of the eigenvalues. The first relies on the Feynman-Hellmann formula that gives the derivative of an eigenvalue with respect to a parameter. It is useful in situations when the curvature is compactly supported. The other method is based on some type of a comparison argument that should give asymptotic behavior of the eigenvalues at $\pm\infty$. It is usually applied in situations when the curvature behaves differently at $\pm\infty$.

%%%%%%%%%%%%%%%%%%%%%%%%%%%%%%%%%%%%%%%%%%%%%%%%%%%%%
\section{Special classes of layers} \label{s:classes}
Recall that throughout the paper we assume \eqref{eq:curv_bound} and \eqref{eq:diff_cond}. In this section, we will suppose that \eqref{eq:V_bound} (which in turn implies \eqref{eq:neg_part_assump}) is always satisfied, too. 

\subsection{One-sided-fold layer}\label{sec:horseshoe}

For the sake of definiteness, assume that  $\lim_{s\to\pm\infty}x(s)=+\infty$. Then
 % ------------- %
$$
\lim_{s\to\pm\infty}\tilde{A}_2 (s,u)=+\infty
$$
 % ------------- %
holds for all $u\in I$. Using \eqref{eq:param} we obtain $\tilde{A}_2(s,u) \geq B_0(x(s) - a)$. This in turn implies that $(\tilde{A}_2)_{-}$ is compactly supported. For any $\xi>0$ we have
 % ------------- %
$$
(\xi+\tilde{A}_2)^2\geq\xi^2+\tilde{A}_{2}^{2}-2\xi\|(\tilde{A}_2)_{-}\|_\infty,
$$
 % ------------- %
and therefore
 % ------------- %
$$
H[\xi]\geq -\partial_s f_{a}^{-2}\partial_s-a^{-2}\partial_{u}^{2}+\tilde{A}_{2}^{2}+\xi^2-2\xi\|(\tilde{A}_2)_{-}\|_\infty-\|V_{-}\|_\infty.
$$
 % ------------- %
The first three terms on the right-hand side are positive and for the remaining part, independent of $s$ and $u$, we have
 % ------------- %
$$
\lim_{\xi\to+\infty}\left(\xi^2-2\xi\|(\tilde{A}_2)_{-}\|_\infty-\|V_{-}\|_\infty\right)=+\infty.
$$
 % ------------- %
Thus to any $C>0$ there is a $\xi_C\in\R$ such that $H[\xi]>C$ holds for all $\xi>\xi_C$, and consequently, no eigenvalue branch may be constant as a function of $\xi$. This together with the general results of Sec.~\ref{sec:analyticity} and \ref{sec:comp} means that the spectrum of $H$ is purely absolutely continuous. Due to unitary equivalence, the same holds true for $-\Delta_{D,A}^\Omega$.

%%%%%%%%%%%%%%%%%%%%%%%%%%%%%%%%%%%%%%
\subsection{A digression: flat layers}

Before proceeding further we are going to show that a mere rotation
of the layer around an axis perpendicular to the magnetic field is
not sufficient -- with one notable exception -- to produce a
transport in the considered system. The decisive quantity is the
tilt angle between the field direction and the layer.

%%%%%%%%%%%%%%%%%%%%%%%%%%%%%%%%%%%%%%%%%%%%%%%%%%%%%%%%%%%%%%%%%%%%%%%%%%%
\subsubsection{Inclined layer not parallel with the magnetic field}
\label{sss: flatlayer}

Let $\gamma$ stand for the angle between the magnetic field $B$ and the normal vector to the layer $n$. In the case of the unperturbed system, a planar layer with a
perpendicular field ($\gamma=0$), it is straightforward
to see by separation of variables in the cylindrical coordinates that the spectrum of $H$ is purely
point. The same is true if $\dot{x}(s)=\cos{\gamma},\
\gamma\in(-\frac{\pi}{2},\frac{\pi}{2})\setminus\{0\},$ holds for all $s\in\R$. Indeed, each
of the fiber operators
 % ------------- %
$$
H[\xi]=-\partial_{s}^{2}+\left(\xi+B_0(s \cos{\gamma}-au\sin{\gamma})\right)^2-a^{-2}\partial_{u}^{2}
$$
 % ------------- %
is unitarily equivalent to $H[0]$, as one can verify employing a
unitary transform $\psi(s,u)\mapsto \psi(s+\xi/(B_0
\cos{\gamma}),u).$ Using then Ref.~\onlinecite[Thm XIII.85]{RS4}, in combination
with the fact that $H[0]$ has compact resolvent, we conclude that
$\sigma(H)=\sigma_p(H)=\sigma(H[0])$ consists of infinitely
degenerate eigenvalues.

For $\gamma=0$, this yields an alternative way to determine the
spectral character in the unperturbed case by observing that
 % ------------- %
\begin{equation*}
 H[\xi]=-\partial_{s}^{2}+(\xi+B_{0}s)^2-a^{-2}\partial_{u}^{2}
\end{equation*}
 % ------------- %
decomposes then into the sum of the Hamiltonian of the harmonic
oscillator with the origin shifted by $-\xi/B_{0}$ and the Dirichlet
Hamiltonian on the line segment $I$. For its eigenpairs,
$(\lambda_{m,n}[\xi],\psi_{m,n}[\xi])$, we have
 % ------------- %
\begin{align*}
 &\lambda_{m,n}[\xi]\equiv \lambda_{m,n}=B_{0}(2m+1)+\left(\frac{n\pi}{2a}\right)^2\\
 &\psi_{m,n}[\xi](s,u)=\psi_{m}(s+\xi/B_{0})\chi_{n}(u),
\end{align*}
 % ------------- %
with
 % ------------- %
\begin{equation}\label{eq:HOef}
 \psi_{m}(x)=(2^m m!)^{-1/2}(B_{0}/\pi)^{1/4}\ee^{-B_{0}x^2/2}H_{m}(B_{0}^{1/2}x),
\end{equation}
 % ------------- %
where $H_{m}$ stands for the $m$th Hermite polynomial, and
 % ------------- %
\begin{equation*}
 \chi_{n}(u)=\begin{cases}
              \cos(n\pi u/2)&\text{if }n\text{ is odd}\\
        \sin(n\pi u/2)&\text{if }n\text{ is even.}
             \end{cases}
\end{equation*}
 % ------------- %
Here $m\in\N_{0},\, n\in\N$ and all the eigenfunctions are
normalized to one in the respective Hilbert spaces.

Since $\lambda_{m,n}[\xi]$ is independent of $\xi$, all the
eigenvalues of $H$ have infinite multiplicity. Let us ask about additional
degeneracies, that is, about the multiplicity of eigenvalues of
$H[\xi]$. Assume that $\lambda_{m,n} =\lambda_{\tilde{m},\tilde{n}}$
for some $m,\tilde{m}\in\N_{0}$ and $n,\tilde{n}\in\N$ such that
$m\neq\tilde{m}$ and $n\neq\tilde{n}$. This means that
 % ------------- %
$$
\theta m+n^2=\theta\tilde{m}+\tilde{n}^2,
$$
 % ------------- %
where $\theta:=8 B_{0}(a/\pi)^2$, which implies that $\theta$ is a positive rational, $\theta\in\Q_+$. Conversely, if this is the case then $\theta=p/q$, for some
$p,q\in\N$, and the equation
 % ------------- %
\begin{equation}\label{eq:degen}
\frac{p}{q}(m-\tilde{m})=(\tilde{n}-n)(n+\tilde{n})
\end{equation}
 % ------------- %
has an infinite number of solutions
$\{m,\tilde{m},n,\tilde{n}\}\in\N_{0}^{2}\times\N^{2}$ satisfying
 % ------------- %
$$
\tilde{n}-n=p,\quad m-\tilde{m}=q(n+\tilde{n}),
$$
 % ------------- %
in other words, every eigenvalue of $H[\xi]$ has infinite
multiplicity.

On the other hand, in the case $\theta\in\R_{+}\setminus\Q$ the
spectrum $H[\xi]$ is simple but it becomes `denser' as the energy
increases. In other words, the eigenvalue gaps have no positive
lower bound. Indeed, by Dirichlet's approximation theorem, for all
$N\in\N$ there exist $p,q\in\N$ such that $q\leq N$ and
$|q\theta-p|\leq(N+1)^{-1}$. We will be concerned about large values
of $N$, so we may assume that $p\geq 3$. The equation
(\ref{eq:degen}) has a infinite number of solutions
$\{m,\tilde{m},n,\tilde{n}\}\in\N_{0}^{2}\times\N^{2}$ satisfying
 % ------------- %
$$\tilde{n}+n=p,\quad 0<|\tilde{n}-n|\leq 2,\quad m-\tilde{m}=q(\tilde{n}-n).$$
 % ------------- %
For any of these solutions we obtain
 % ------------- %
\begin{align*}
|\theta m+n^2-(\theta \tilde{m}+\tilde{n}^2)| &= \left|\left(\frac{p}{q}+\theta-\frac{p}{q}\right)(m-\tilde{m})+n^2-\tilde{n}^2\right| \\
&= \left| \theta-\frac{p}{q}\right||m-\tilde{m}|
\leq \frac{1}{q(N+1)}\,2q=\frac{2}{N+1}.
\end{align*}
 % ------------- %
We infer that for any $\varepsilon>0$ there are infinitely many
pairs of eigenvalues $\lambda_{m,n},\,\lambda_{\tilde{m},\tilde{n}}$
with the property that
 % ------------- %
$$
|\lambda_{m,n}-\lambda_{\tilde{m},\tilde{n}}|<\varepsilon
$$
 % ------------- %
which proves our claim.

%%%%%%%%%%%%%%%%%%%%%%%%%%%%%%%%%%%%%%%%%%%%%%%%%%%%%%%%%%%%%%%%%%%%%%%
\subsubsection{Inclined layer parallel with the magnetic field}

The situation changes when the tilted layer has the right angle with
the original one becoming thus parallel to the field direction. Then
we have $\gamma=\frac{\pi}{2}$, and therefore
 % ------------- %
\begin{equation}\label{eq:decomp}
H[\xi]=-\partial_{s}^{2}+\left(\xi-B_0 au\right)^2-a^{-2}\partial_{u}^{2}=\overline{T_1\otimes I + I\otimes T_2[\xi]},
\end{equation}
 % ------------- %
where
 % ------------- %
$$
T_1:=-\partial_{s}^{2},\quad T_2[\xi]:=-a^{-2}\partial_{u}^{2}+\left(\xi-B_0 au\right)^2,
$$
 % ------------- %
and we decomposed $L^{2}(\R\times I,\dd s\dd u)=L^2(\R,\dd s)\otimes
L^2(I,\dd u)=:\mathscr{H}_1\otimes\mathscr{H}_2$. The other choice $\gamma=-\frac{\pi}{2}$ gives rise to a unitarily equivalent operator. Since
$\sigma(T_1)=\sigma_{ac}(T_1)=[0,+\infty)$ and $T_2[\xi]$ has a
positive simple pure point spectrum,
$\sigma(H[\xi])=[\inf\sigma(T_2[\xi]),+\infty)$ for all $\xi\in\R$.
Moreover according to Ref.~\onlinecite{Da_06} the absolute continuity of one of the operators in the decomposition \eqref{eq:decomp} implies the absolute continuity of the full operator. Since Ref.~\onlinecite{Da_06} does not contain any proof of this claim nor any references to it, we decided to include the following simple reasoning.

Let $E_1,\, E_2$, and $E$ be spectral families of the operators
$T_1,\,T_2[\xi]$ and $H[\xi]$, respectively. Then for all
decomposable vectors  $f_1\otimes
f_2\in\mathscr{H}_1\otimes\mathscr{H}_2$ we have
 % ------------- %
\begin{multline*}
 \|E(t) (f_1\otimes f_2)\|^2=\int_{-\infty}^{\infty}\int_{-\infty}^{t-s}\dd_u\|E_1(u)f_1\|^2 \dd_s\|E_2(s)f_2\|^2\\
 =\int_{\R^2}\chi_{(-\infty,t]}(u+s)\dd_u\|E_1(u)f_1\|^2 \dd_s\|E_2(s)f_2\|^2\\
 =\|E_1(\cdot)f_1\|^2 * \|E_2(\cdot)f_2\|^2((-\infty,t]);
\end{multline*}
 % ------------- %
for the first equality we refer here to the proof of Theorem
8.34 in Ref.~\onlinecite{weidmann}. Using the Fubini theorem it is easy to check that
the convolution of an absolutely continuous measure with another
(Borel) measure is also absolutely continuous. Consequently, the
absolutely continuous subspace of
$\mathscr{H}_1\otimes\mathscr{H}_2$ contains all finite linear
combinations of the decomposable vectors. However, these functions
form a dense subspace, and moreover, the absolutely continuous
subspace is always closed. This allows us to infer that $H[\xi]$ is
purely absolutely continuous.

Let $(\mu_n[\xi],\varphi_n[\xi]),\, n\in\N$, denote the eigenpairs
of $T_2[\xi]$, where the eigenvalues are numbered in the ascending
order. Some important properties of the $\mu_n[\xi]$'s are reviewed
in Ref.~\onlinecite{GeSe_97}. In particular, they are  even and
strictly increasing for $\xi>0$, thus they have the only stationary
point at $\xi=0$. Since $\sigma(H[\xi])=[\mu_1[\xi],+\infty)$,
$\sigma(H)=\sigma_{ac}(H)=[\mu_1[0],+\infty)$. Hence the bottom of
$\sigma(H)$ is given by the first eigenvalue of the harmonic
oscillator constrained to the line segment $I$. This question was
addressed repeatedly in the literature, see e.g. Ref.~\onlinecite{De_66}, and
it is easy to see that the answer is given by the smallest solution
of the equation
 % ------------- %
\begin{equation} \label{spectcon}
 {}_1 F_1\left(-\frac{\mu}{4 B_0}+\frac{1}{4},\frac{1}{2},B_0 a^2\right)=0
\end{equation}
 % ------------- %
with respect to $\mu$. Here ${}_1 F_1$ stands for the Kummer
confluent hypergeometric function. If we denote this solution
$\mu(B_0,a)$, then $\mu(B_0,a)=a^{-2} \mu(B_0 a^2,1)$, hence it is
sufficient to inspect the dependence of $\mu$ on one of the
parameters. The solution to the spectral condition \eqref{spectcon}
cannot be written in a closed form but can be found numerically, see
Fig.~\ref{fig:bottom}. Moreover, using known asymptotic properties
of the Kummer functions \cite{De_66} one can find the behavior of
$\mu(B_0,1)$,
 % ------------- %
\begin{equation}\label{eq:bottom_asy}
\mu(B_0,1)=\frac{\pi^2}{4}+\left(\frac{1}{3}-\frac{2}{\pi^2}\right)B_{0}^{2}+\left(\frac{4}{45\pi^2}-\frac{20}{3\pi^4}+\frac{56}{\pi^6}\right)B_0^4+\mathcal{O}(B_{0}^{5})
\end{equation}
 % ------------- %
as $B_0\to 0$.

 % ------------- %
\begin{figure}[hbt]
  \begin{center}
  \includegraphics{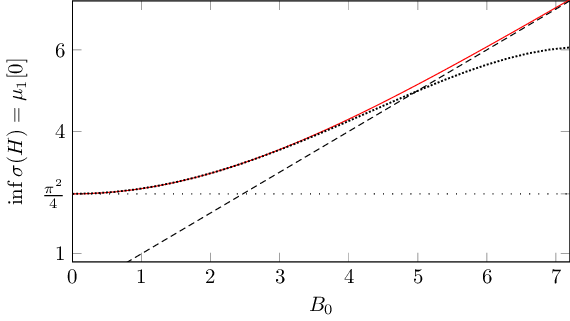}
  \end{center}
  \caption{Bottom of $\sigma(H)$ as a function of $B_0$ (red solid line); $a=1$. The black loosely dotted line corresponds to the first Dirichlet eigenvalue on $(-1,1)$, i.e., $\frac{\pi^2}{4}$. The black densely dotted line corresponds to the asymptotic expansion \eqref{eq:bottom_asy} at $B_0=0$ and the black dashed line to the asymptotic expansion \eqref{eq:bottom_strongasy} at $B_0=+\infty$.}  \label{fig:bottom}
\end{figure}
 % ------------- %

To derive the asymptotic behavior of $\mu(B_0,1)$ for large values of
$B_0$ we begin with the variational characterization of the lowest
eigenvalue,
%-------------%
$$
\mu(B_0,1)=\mu_1[0]|_{a=1}=\inf_{\psi\in H_0^1(I),\|\psi\|_I=1}\left(\|\psi'\|^2_I+B_0^2\|u\psi\|^2_I\right).
$$
%-------------%
Let $\psi_0$ be the ground state of the one-dimensional harmonic
oscillator, see \eqref{eq:HOef} for its explicit normalized form,
then clearly
% -------------%
\begin{equation}\label{eq:HO_low_bound}
 \mu(B_0,1)>\inf_{\psi\in H^1(\R), \|\psi\|_\R=1}\left(\|\psi'\|^2_\R+B_0^2\|u\psi\|^2_\R\right)=\|\psi_0'\|^2_\R+B_0^2\|u\psi_0\|^2_\R=B_0.
\end{equation}
%-------------%
If we put $\tilde\psi(u):=\psi_0(u)-\psi_0(1),\, u\in I,$ then
$\tilde\psi\geq 0,\ \tilde\psi_0\in H_0^1(I)$ and
%-------------%
\begin{multline*}
 \mu(B_0,1)\leq\frac{1}{\|\tilde\psi\|_I^{2}}\langle\tilde\psi,T_2[0]|_{a=1}\tilde\psi\rangle_I=\frac{1}{\|\tilde\psi\|_I^{2}}\langle\tilde\psi,B_0\tilde\psi+B_0\psi_0(1)-B_0^2\psi_0(1)u^2\rangle_I\\
 \leq \frac{1}{\|\tilde\psi\|_I^{2}}\langle\tilde\psi,B_0\tilde\psi+B_0\psi_0(1)\rangle_I=B_0+B_0\psi_0(1)\frac{1}{\|\tilde\psi\|_I^{2}}\langle\tilde\psi,1\rangle_I\\
 \leq B_0+B_0\psi_0(1)\frac{1}{\|\tilde\psi\|_I}|I|^{1/2}.
\end{multline*}
%-------------%
Next, for any fixed $\delta\in(0,1)$ we have
%-------------%
$$
\|\tilde\psi\|_I\geq
\|\psi_0\|_I-\|\psi_0(1)\|_I=\|\psi_0\|_I-\psi_0(1)|I|^{1/2}\geq
1-\delta
$$
%-------------%
provided $B_0$ is sufficiently large. We conclude that, for any
$\tilde\delta>0$,
%-------------%
\begin{equation*}
 \mu(B_0,1)\leq B_0+(1+\tilde\delta)B_0\psi_0(1)|I|^{1/2}=B_0+(1+\tilde\delta)\frac{\sqrt{2}}{\pi^{1/4}}B_{0}^{5/4}\mathrm{e}^{-B_0/2}
\end{equation*}
%-------------%
as $B_0\to+\infty$. Putting this together with
\eqref{eq:HO_low_bound} we arrive at the expansion
% ------------- %
\begin{equation}\label{eq:bottom_strongasy}
\mu(B_0,1)=B_0 + \mathcal{O} (B_{0}^{5/4}\mathrm{e}^{-B_0/2})
\end{equation}
 % ------------- %
valid as $B_0\to +\infty$.

In turn, $H$ is also purely absolutely continuous by Ref.~\onlinecite[Thm
XIII.85]{RS4}. The present situation is particular because $H$ is
also invariant with respect to $s$-translations, i.e. in the
$z$ direction. Using the partial Fourier--Plancherel transform
in $s$, we find that $\tilde{H}$ is unitarily equivalent to the
operator having the following direct integral decomposition
 % ------------- %
\begin{equation*}
 \int_{\R}^{\oplus}T[\eta]\dd\eta:=\int_{\R}^{\oplus}(\eta^2+T[0])\,    \dd\eta,\text{ where }T[0]=\left(-i\partial_y-B_0 au\right)^2-a^{-2}\partial_{u}^{2}.
\end{equation*}
 % ------------- %
The physical contents of this decomposition is obvious: $T[0]$ is a
purely absolutely continuous operator describing the
edge-state-induced transport in a two-dimensional Dirichlet strip
due to the perpendicular magnetic field
\cite{BrRaSo_07,GeSe_97,HiSo_08}, and so is $T[\eta]$. The
difference between $T[\eta]$ and $T[0]$ which `adds' to the absolute
continuity of $H$ is the square of the momentum in the $z$ direction
where the motion is free. 

The decomposition \eqref{eq:di_decomp} is also reflected in the
unitary propagator for $H$ which is given by a direct integral of
the unitary propagators for $H[\xi],\,\xi\in\R$. Since each $H[\xi]$
is separated, we have
 % ------------- %
\begin{equation*}
 \exp{(-itH[\xi])}=\overline{\exp{(-itT_1)}\otimes\exp{(-itT_2[\xi])}}
\end{equation*}
 % ------------- %
by Ref.~\onlinecite[Thm 8.35]{weidmann}. In the $z$ direction the evolution
governed by the propagator with the well-known kernel
 % ------------- %
$$
(\exp{(-itT_1)})(s,s')=(4\pi i t)^{-1/2}\exp{\frac{(s-s')^2}{4i t}},
$$
 % ------------- %
while the second propagator in the tensor product decomposition can
be expressed in terms of the eigenpairs of $T_2[\xi]$ describing the
edge states as
 % ------------- %
$$
\exp{(-itT_2[\xi])}=\sum_{n\in\N}\mathrm{e}^{-it\mu_n[\xi]}\vert\varphi_n[\xi]\rangle\langle\varphi_n[\xi]\vert.
$$
 % ------------- %
The latter describes the well-known edge-current dynamics
\cite{HiSo_08}, the additional degree of freedom is a free motion of
the wavepackets in the $z$ direction having the usual properties, in
particular, the spreading with time \cite[Sec.~9.3]{BEH}.

%%%%%%%%%%%%%%%%%%%%%%%%%%%%%%%%%%%%%%%%%%%%%%%%%%%
\subsection{Bent and asymptotically flat layers} \label{sec:bent}

Let us return now to geometrically nontrivial perturbations of the
layer and assume they are localized at any fixed $y$ cut, i.e., that
the layer is flat for $s$ outside a bounded interval $(\tilde s_0,s_0)$. Specifically,
we suppose that $\dot{x}(s)=\alpha_{+}>0$ holds for all $s\geq s_0$ and $\dot{x}(s)=\alpha_{-}>0$ holds for all $s\leq\tilde s_0$. Due to \eqref{eq:param}, $\alpha_\pm\le 1$ and we may put  $\dot{z}(s)=\sqrt{1-\alpha_{+}^{2}}$ and $\dot{z}(s)=\sqrt{1-\alpha_{-}^{2}}$ for all $s\geq s_0$ and $s\leq\tilde{s}_0$, respectively. Here we have chosen the signs of $\dot z(s)$ outside $(\tilde s_0,s_0)$ just for definiteness -- with other choices the considerations below would be exactly the same. Therefore, we have
 % ------------- %
\begin{equation*}
\tilde{A_2}(s,u)=\begin{cases}B_0\left(\alpha_{+}(s-s_0)+x(s_0)-au\sqrt{1-\alpha_{+}^{2}}\right)&         		  s\geq s_0\\
                  B_0\left(\alpha_{-} (s-\tilde s_0)+x(\tilde s_0)-au\sqrt{1-\alpha_{-}^2}\right)& s\leq\tilde{s}_0.
                 \end{cases}
\end{equation*}
 % ------------- %
Note that the  positivity of $\alpha_\pm$ means that we exclude the situation where the layer is
asymptotically parallel with the magnetic field. With the future
purpose in mind we also assume that  $\alpha_{+} \ne \alpha_{-}$; without loss of generality
we may suppose that $\alpha_{-} > \alpha_{+}$, since in the
case $\alpha_{-} < \alpha_{+}$ it is sufficient to change the layer parametrization
replacing $s$ by $-s$. 

Let $s_1= s_1(\xi)$ be
the unique solution of the equation
 % ------------- %
$$
\xi+B_0\big(\alpha_{+}(s_1-s_0)+x(s_0)\big)=0.
$$
 % ------------- %
It is straightforward to
check that $\lim_{\xi\to -\infty}s_1(\xi)=+\infty$ and
 % ------------- %
$$
\xi+\tilde{A}_2=B_0\left(\alpha_{+}(s-s_1)-au\sqrt{1-\alpha_{+}^2}\right)
$$
 % ------------- %
holds for all $s\geq s_0$. Consequently, the fiber $H[\xi]$ acts for
all $s\geq s_0$ in the same way as the following positive operator,
 % ------------- %
$$
H_{\alpha_{+},s_1}(B_0)\equiv H_{\alpha_{+}, s_1}=-\partial_{s}^{2}+B_{0}^{2}\left(\alpha_{+}(s-s_1)-au\sqrt{1-\alpha_{+}^2}\right)^2-a^{-2}\partial_{u}^{2}.
$$
 % ------------- %
Next we introduce the unitary transform $U_{s_1}:\,
\psi(s,u)\mapsto \psi(s-s_1,u)$, then
 % ------------- %
$$
U_{s_1}^{-1}H_{\alpha_{+},s_1}U_{s_1}=H_{\alpha_{+},0}=:H_{\alpha_{+}}\equiv H_{\alpha_{+}}(B_0)
$$
 % ------------- %
and $H_{+}[\xi]:=U_{s_1}^{-1}H[\xi]U_{s_1}$ acts as
 % ------------- %
\begin{equation*}
  H_{+}[\xi]=
 \begin{cases}
 H_{\alpha_{+}} &\quad s\geq s_0-s_1\\[.7em]
 \begin{aligned}&-\partial_{s}f_a(s+s_1,u)^{-2}\partial_{s}+(\xi+\tilde{A}_{2}(s+s_1,u))^2\\
 &-a^{-2}\partial_{u}^{2}+V(s+s_1,u)\end{aligned} &\quad s<s_0-s_1.
 \end{cases}
\end{equation*}
 % ------------- %

 % ------------- %
\begin{lemma}\label{lem:top_eq}
Let $s_1=s_1(\xi)$ be as above. Then, for all $\xi$
sufficiently negative, there exist constants  $C_{\pm}(\xi)$ and
$K_{\pm}\in\R$, the latter being independent of $\xi$, such that
$0<C_{-}(\xi)<C_{+}(\xi)$,
 % ------------- %
\begin{equation}\label{eq:squeeze}
 C_{-}(\xi)H_{\alpha_{+}}+K_{-}\leq H_{+}[\xi]\leq C_{+}(\xi)H_{\alpha_{+}}+K_{+},
\end{equation}
 % ------------- %
and the $C_{\pm}(\xi)$ have finite positive limits as
$\xi\to-\infty$.
\end{lemma}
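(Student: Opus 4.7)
My plan is to decompose the $s$-axis at $s = s_0 - s_1$ into the matched region $R := \{s \ge s_0 - s_1\}$, on which $H_+[\xi]$ and $H_{\alpha_+}$ coincide identically by construction, and the unmatched region $L := \{s < s_0 - s_1\}$. Since the two quadratic forms agree on $R$, the entire sandwich reduces to controlling the integrand discrepancies on $L$, and this is where $C_\pm(\xi)$ and $K_\pm$ are produced. Inside $L$ I further split off the transition interval $L_1 := (\tilde s_0 - s_1, s_0 - s_1)$ of the fixed length $s_0 - \tilde s_0$ (shifted by $-s_1$) from the asymptotically flat piece $L_2 := (-\infty, \tilde s_0 - s_1]$, on which the explicit formula \eqref{eq:vec_pot_neg} is available.

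Two ingredients come essentially for free. By \ref{eq:diff_cond}, $f_a^{-2}$ is pinched between the positive $\xi$-independent constants $d_\pm^{-2} := (1 \mp a\|\kappa\|_\infty)^{-2}$, yielding operator bounds on the $s$-kinetic piece with multiplicative constants that do not depend on $\xi$. Furthermore, in the bent asymptotically flat geometry the curvature $\kappa$ is compactly supported (because $\dot x$ is constant at both asymptotic ends and $\dot x^2 + \dot z^2 = 1$), so the geometric potential $V$ is uniformly bounded and its contribution is absorbed into $K_\pm$ through $|\langle \psi, V\psi\rangle| \le \|V\|_\infty \|\psi\|^2$.

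The core step is the pointwise comparison of the magnetic potentials $\tilde B(s,u)^2 := (\xi + \tilde A_2(s+s_1,u))^2$ and $A(s,u)^2 := B_0^2(\alpha_+ s - au\sqrt{1-\alpha_+^2})^2$ on $L$. On $L_1$, using the defining relation $\xi = -B_0(\alpha_+(s_1 - s_0) + x(s_0))$ together with boundedness of $\tilde A_2$ on the fixed compact set $s+s_1 \in [\tilde s_0, s_0]$, both $\tilde B$ and $A$ equal $-B_0 \alpha_+ s_1$ plus bounded corrections, whence $\tilde B^2/A^2 \to 1$ uniformly as $s_1 \to \infty$. On $L_2$, substituting \eqref{eq:vec_pot_neg} and eliminating $\xi$ gives the explicit form $\tilde B(s,u) = B_0 \alpha_- s + B_0(\alpha_- - \alpha_+) s_1 + C_0 - B_0 au\sqrt{1-\alpha_-^2}$ with $C_0$ independent of $\xi$. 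I treat $s \mapsto \tilde B^2 - C A^2$ as a quadratic in $s$: its vertex lies at $s_{\min} \sim -(\alpha_- - \alpha_+) s_1/\alpha_-$, and because $(\alpha_- - \alpha_+)/\alpha_- < 1$ this vertex falls strictly to the right of $L_2$ for every sufficiently negative $\xi$. Hence the quadratic is monotone on $L_2$ and its extremum over $L_2$ is attained at the boundary $s = \tilde s_0 - s_1$, where a direct calculation gives the leading-order value $(1 - C)\alpha_+^2 B_0^2 s_1^2 + O(s_1)$. For any fixed $C < 1$ this forces $\tilde B^2 \ge C A^2$ on $L_2$ once $\xi$ is sufficiently negative; dually, for $C \ge (\alpha_-/\alpha_+)^2$ the leading coefficient in $s^2$ is non-positive, the sup on $L_2$ is again attained at the boundary and is itself nonpositive, so $\tilde B^2 \le C A^2 + \mathrm{const}$.

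Assembly is routine: writing $\mathfrak{h}_+[\xi][\psi] - \mathfrak{h}_{\alpha_+}[\psi]$ as an integral over $L$ of the three discrepancies (metric, magnetic potential, $V$), using the elementary dominations $\int_L |\partial_s\psi|^2 \le \mathfrak{h}_{\alpha_+}[\psi]$ and $\int_L A^2|\psi|^2 \le \mathfrak{h}_{\alpha_+}[\psi]$, and combining with the bounds above, yields the two-sided form inequality with $C_\pm(\xi)$ tending to finite positive limits (composites of $d_\pm^{-2}$ and the potential-comparison constants on $L_1 \cup L_2$) and $K_\pm$ independent of $\xi$. The principal obstacle throughout is the cross term $2 B_0^2 \alpha_-(\alpha_- - \alpha_+) s_1 \cdot s$ appearing in the expansion of $\tilde B^2$ on $L_2$: its coefficient diverges with $s_1$, and any crude Cauchy--Schwarz estimate would immediately destroy the $\xi$-uniformity of $K_\pm$. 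The rescue is geometric: the vertex of the quadratic $\tilde B^2$ drifts to the right of $L_2$ thanks to the strict inequalities $\alpha_+ > 0$ and $\alpha_+ < \alpha_-$, so on $L_2$ the coercive $B_0^2 \alpha_-^2 s^2$ term dominates and absorbs both the dangerous cross term and the $C\alpha_+^2 s^2$ piece of $A^2$.
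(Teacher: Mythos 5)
Your argument is correct and follows essentially the same route as the paper's proof: split at $s=s_0-s_1$ where $H_+[\xi]$ and $H_{\alpha_+}$ agree, pinch the metric factor $f_a^{-2}$ between the $\xi$-independent constants $(1\mp a\|\kappa\|_\infty)^{-2}$, absorb the compactly supported $V$ into $K_\pm$, and establish a two-sided pointwise comparison of $(\xi+\tilde A_2(s+s_1,u))^2$ with $B_0^2(\alpha_+ s-au\sqrt{1-\alpha_+^2})^2$ on the left half-line with constants tending to $1$ and $(\alpha_-/\alpha_+)^2$. The only difference is in the bookkeeping of that comparison: the paper squeezes the piecewise-linear function $\xi+\tilde A_2(s+s_1,0)$ between two lines through its zero at $s=0$ and then squares, whereas you compare the squares directly via the location of the vertex of the quadratic $\tilde B^2-CA^2$; both rest on the same facts, namely that $\xi+\tilde A_2$ vanishes at $s=s_1$ inside the flat right end and that $0<\alpha_+<\alpha_-$.
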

 % ------------- %
\begin{proof}
For all $s<s_0-s_1$ we have
 % ------------- %
 \begin{equation}\label{eq:up_bound}
   H_{+}[\xi]\leq-(1-a\|\kappa\|_\infty)^{-2}\partial_{s}^{2}+(\xi+\tilde{A}_{2}(s+s_1,u))^2-a^{-2}\partial_{u}^{2}+\|V\|_\infty.
 \end{equation}
 % ------------- %
Since $V$ is continuous and compactly supported, $\|V\|_\infty<+\infty$.
Given $s\in\R$, put $f(s):=\xi+\tilde{A}_{2}(s+s_1,0)$, then
$f(0)=0$, $f(s)=B_0\alpha_{+} s$ on $(s_0-s_1,+\infty)$,
$f(s)=f(\tilde s_0-s_1)+B_0\alpha_{-}(s-\tilde s_0+s_1)$ on
$(-\infty,\tilde s_0-s_1)$, and $\sup_{s,s'\in(\tilde
s_0-s_1,s_0-s_1)}|f(s)-f(s')|=\sup_{s,s'\in(\tilde
s_0,s_0)}|\tilde{A}_{2}(s,0)-\tilde{A}_{2}(s',0)|<+\infty$. Hence
for all $\xi$ large enough negative, i.e. for $s_1$ sufficiently
large there exist $\tilde C_\pm=\tilde C_\pm(\xi)$ such that
$0<\tilde C_{-}<\tilde C_{+}$ and $\tilde C_{+} s\leq f(s)\leq\tilde
C_{-} s$ on $(-\infty,0)$. Moreover, since $\sup_{s\in\R,u\in I}|
\xi+\tilde A_2(s+s_1,u)-f(s)|=\sup_{s\in\R,u\in I}|\tilde
A_2(s,u)-\tilde A_2(s,0)|<+\infty$, we conclude that for all $\xi$
sufficiently large negative there are $\hat C_\pm=\hat C_\pm(\xi)$
such that $0<\hat C_{-}<1<\hat C_{+}$, and
 % ------------- %
\begin{align*}
\hat C_{+}B_0\Big(\alpha_{+} s-au\sqrt{1-\alpha_{+}^2}\Big)\leq
\xi&+\tilde{A}_{2}(s+s_1,u) \\
&\leq \hat C_{-}B_0\Big(\alpha_{+}
s-au\sqrt{1-\alpha_{+}^2}\Big)<0
\end{align*}
 % ------------- %
holds on $(-\infty,s_0-s_1)\times I$. A closer inspection shows that
$\hat C_\pm$ may be chosen in such a way that
 % ------------- %
\begin{equation*}
 \hat C_{-}(\xi)\nearrow\frac{\min\{\alpha_{+},\alpha_{-}\}}{\alpha_{+}}=1, \quad \hat C_{+}(\xi)\searrow\frac{\max\{\alpha_{+},\alpha_{-}\}}{\alpha_{+}}=\frac{\alpha_{-}}{\alpha_{+}}
\end{equation*}
 % ------------- %
holds as $\xi\to -\infty$. Now we can proceed with estimate \eqref{eq:up_bound},
 % ------------- %
\begin{multline*}
  H_{+}[\xi]\leq -(1-a\|\kappa\|_\infty)^{-2}\partial_{s}^{2}+\hat C_{+}^{2}B_0^2(\alpha_{+} s-au\sqrt{1-\alpha_{+}^2})^2-a^{-2}\partial_{u}^{2}+\|V\|_\infty\\
 \leq \max\left\{(1-a\|\kappa\|_\infty)^{-2}, \hat C_{+}^{2}\right\}H_{\alpha_{+}}+\|V\|_\infty,
\end{multline*}
 % ------------- %
for all $s<s_0-s_1$. For $s\geq s_0-s_1$, this bound holds
trivially, too. In a similar manner one can estimate $H_{+}[\xi]$
from below putting
 % ------------- %
\begin{equation*}
C_{-}(\xi)=\min\left\{(1+a\|\kappa\|_\infty)^{-2}, \hat C_{-}(\xi)^{2}\right\},\quad K_{-}=-\|V\|_\infty. \qedhere
\end{equation*}
 % ------------- %
\end{proof}

If $\xi$ is sufficiently large positive the argument is a simple modification
of the above one. We define $\tilde s_1=\tilde s_1(\xi)$ as the
unique solution of the equation
 % ------------- %
$$
\xi+B_0(\alpha_{-}(\tilde s_1-\tilde s_0)+x(\tilde s_0))=0.
$$
 % ------------- %
Clearly, $\lim_{\xi\to +\infty}\tilde s_1(\xi)=-\infty$ and the
operator $H_{-}[\xi]:=U_{\tilde s_1}^{-1}H[\xi]U_{\tilde s_1}$ acts
as
 % ------------- %
\begin{equation*}
 H_{-}[\xi]=
 \begin{cases}
 H_{\alpha_{-}} &\quad s\leq \tilde s_0-\tilde s_1\\[.7em]
 \begin{aligned}&-\partial_{s}f_a(s+\tilde s_1,u)^{-2}\partial_{s}+(\xi+\tilde{A}_{2}(s+\tilde s_1,u))^2\\
 &-a^{-2}\partial_{u}^{2}+V(s+\tilde s_1,u)\end{aligned} &\quad s>\tilde s_0-\tilde s_1.
 \end{cases}
\end{equation*}
 % ------------- %
The operator pair $H_{-}[\xi]$ and $H_{\alpha_{-}}$ satisfies an
estimate analogous to \eqref{eq:squeeze}.

Now we are in position to prove an important convergence result.
 % ------------- %
\begin{proposition}\label{prop:res_conv}
Let $\mu>\|V_{-}\|_\infty$, then we have
 % ------------- %
$$
\lim_{\xi\to \mp\infty}\big\|(
H_{\pm}[\xi]+\mu)^{-1}-(H_{\alpha_{\pm}}+\mu)^{-1}\big\|=0.
$$
 % ------------- %
\end{proposition}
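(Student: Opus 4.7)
The plan is to treat the $+$ case; the $-$ case is the obvious symmetric analogue. The argument rests on two structural facts. First, $H_{+}[\xi]$ and $H_{\alpha_{+}}$ coincide, as differential operators, on the half-plane $\{s > s^{*}\}$ where $s^{*} := s_{0} - s_{1}(\xi) \to -\infty$ as $\xi \to -\infty$. Second, both operators contain a quadratic confining potential in $s$: for $H_{\alpha_{+}}$ directly by inspection of its definition, and for $H_{+}[\xi]$ through the lower bound $H_{+}[\xi]\ge C_{-}(\xi) H_{\alpha_{+}}+K_{-}$ of Lemma~\ref{lem:top_eq}, in which $C_{-}(\xi)$ has a strictly positive limit. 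Hence $s^{2}\lesssim H_{\alpha_{+}}+1$ in the form sense, and the same bound transfers to $H_{+}[\xi]$ with constants uniform in $\xi$ sufficiently negative; in particular $\|sR_{\infty}\|_{L^{2}\to L^{2}}$ is finite and $\|sR(\xi)\|_{L^{2}\to L^{2}}$ is uniformly bounded, where $R(\xi):=(H_{+}[\xi]+\mu)^{-1}$ and $R_{\infty}:=(H_{\alpha_{+}}+\mu)^{-1}$.

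The main tool would be the form version of the second resolvent identity: since $H_{+}[\xi]$ and $H_{\alpha_{+}}$ share the same form domain, writing $b_{\xi},b_{\infty}$ for their closed sesquilinear forms, one has
\[
\langle f,(R(\xi)-R_{\infty})g\rangle \;=\; (b_{\infty}-b_{\xi})(R(\xi) f,\,R_{\infty} g)\qquad (f,g\in L^{2}).
\]
The form difference $b_{\infty}-b_{\xi}$ is supported in $\{s\leq s^{*}\}$ and decomposes into three natural pieces: (i) a kinetic term with bounded coefficient $(1-f_{a}^{-2})(s+s_{1},u)$ supported in the fixed-length curvature interval $[\tilde s_{0}-s_{1},\,s_{0}-s_{1}]$; (ii) the effective potential $-V$, which is bounded and compactly supported inside the same interval; and (iii) the dominant ``potential'' piece $W(s,u)-(\xi+\tilde A_{2}(s+s_{1},u))^{2}$ on all of $\{s\leq s^{*}\}$, which is bounded pointwise by $Cs^{2}$ with $C$ independent of $\xi$, by the explicit asymptotic formulae already exploited in the proof of Lemma~\ref{lem:top_eq}.

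Each piece would then be estimated by Cauchy--Schwarz, the critical one being (iii): writing $\chi^{-}$ for the indicator of $\{s\leq s^{*}\}$,
\[
\bigl|(b_{\infty}^{\mathrm{pot}}-b_{\xi}^{\mathrm{pot}})(R(\xi) f,R_{\infty}g)\bigr|\;\leq\; C\,\|sR(\xi) f\|_{L^{2}}\cdot\|s\chi^{-}R_{\infty} g\|_{L^{2}}.
\]
The first factor is $\leq C\|f\|$ by the first paragraph. For the second, I would use that $sR_{\infty}=(sR_{\infty}^{1/2})\,R_{\infty}^{1/2}$ is the product of a bounded operator with a compact one (since $R_{\infty}$ is compact by Sec.~\ref{sec:comp}, so is $R_{\infty}^{1/2}$), hence compact itself; since $\chi^{-}\to 0$ strongly as $s^{*}\to-\infty$, the standard ``compact times strongly-null'' fact yields $\|\chi^{-}sR_{\infty}\|_{L^{2}\to L^{2}}\to 0$, so $\|s\chi^{-}R_{\infty} g\|= o(1)\|g\|$ uniformly in $g$. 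The contributions from (i) and (ii) are handled in the same spirit: $\partial_{s}R_{\infty}$ and $R_{\infty}$ are likewise compact, so multiplication by the indicator of the middle interval (which also tends to $0$ strongly) converts to norm decay, while the corresponding factors for $R(\xi)$ are uniformly bounded by the form estimates of the first paragraph. Summing the three contributions gives $\|R(\xi)-R_{\infty}\|_{L^{2}\to L^{2}}\to 0$.

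The principal obstacle is piece (iii): the form difference grows quadratically in $s$ on a receding support region and is not small in any naive operator norm. It is tamed precisely by pairing that quadratic growth against the quadratic-in-$s$ decay of $sR_{\infty}$ and $sR(\xi)$ far from the origin, which is a direct consequence of the quadratic confinement on both sides; the conversion of the strong-decay of $\chi^{-}$ into norm-decay of $\chi^{-}sR_{\infty}$ rests on the compactness of $sR_{\infty}$, itself a manifestation of the confining potential of $H_{\alpha_{+}}$ and the compactness of $R_{\infty}$ already proven in Sec.~\ref{sec:comp}.
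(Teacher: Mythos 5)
Your argument is correct, but it takes a genuinely different route from the paper. The paper's proof of Proposition~\ref{prop:res_conv} is a one-line application of an abstract norm-resolvent convergence theorem, \cite[Thm 2.3]{Tu_16} (reproduced in the text), whose four hypotheses --- common core, uniform one-sided form bound $CA[+\infty]+K\le A[\alpha]$, coincidence of the operators on every compact set for large parameter, and compactness of the limiting resolvent --- are verified via Lemma~\ref{lem:top_eq} and Sec.~\ref{sec:comp}. What you do instead is essentially reprove the relevant instance of that abstract theorem from scratch: the structural inputs are the same (the two-sided comparison of Lemma~\ref{lem:top_eq}, which in particular guarantees that $H_{+}[\xi]$ and $H_{\alpha_{+}}$ have the same form domain so that your form resolvent identity is legitimate; the coincidence of the two operators on $\{s>s_0-s_1(\xi)\}$; the compactness of $(H_{\alpha_{+}}+\mu)^{-1}$), but you assemble them directly, splitting the form difference into a receding compactly supported part and a quadratically growing tail, and trading the growth of the latter against the uniform bounds on $sR(\xi)$ and the norm decay of $\chi^{-}sR_{\infty}$ coming from compactness. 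The paper's route is shorter and reusable (the same citation handles $\hat h_{\mathrm{eff}}[\xi]$ later in Sec.~\ref{s:classes} and the Iwatsuka extension); yours is self-contained and makes transparent exactly which mechanism produces the convergence. Two small points to tidy up if you write this out: state explicitly that the two-sided bound of Lemma~\ref{lem:top_eq} gives equality of form domains before invoking the identity $\langle f,(R(\xi)-R_{\infty})g\rangle=(b_{\infty}-b_{\xi})(R(\xi)f,R_{\infty}g)$, and carry the harmless constant in the pointwise bound $C(s^{2}+1)$ for piece (iii) rather than $Cs^{2}$.
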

 % ------------- %
\begin{proof}
It is sufficient to apply Theorem 2.3 of Ref.~\onlinecite{Tu_16}. To make the paper
self-contained we reproduce this result here: let $\{A[\alpha]|\,
\alpha\in(-\infty,+\infty]\}$ be a one parametric family of
lower--bounded self-adjoint operators on $L^2(M)$, where
$M\subset\R^n$ is open, with the following properties
 % ------------- %
 \begin{enumerate}[\upshape (i)]
  \item \label{prop:core} $C_{0}^{\infty}(M)$ is a core of $A[\alpha]$ for all $\alpha\in(-\infty,+\infty]$.
  \item \label{prop:low_bound} There exist $C>0$ and $K,\,\alpha_0\in\R$ such that, for all $\alpha\geq\alpha_0$, $C A[+\infty]+K\leq A[\alpha]$.
  \item \label{prop:conv} For any compact set $\mathcal{K}\subset M$, there exists $\alpha_{\mathcal{K}}$ such that, for all $\alpha\geq\alpha_{\mathcal{K}}$, $A[\alpha]|_{C_{0}^{\infty}(\mathcal{K})}=A[+\infty]|_{C_{0}^{\infty}(\mathcal{K})}$.
  \item \label{prop:comp} $A[+\infty]$ has compact resolvent.
 \end{enumerate}
 % ------------- %
Then, for any $z\in\mathrm{Res}(A[+\infty])$ and $\varepsilon>0$,
there exists $\alpha_{z,\varepsilon}$ such that for all
$\alpha>\alpha_{z,\varepsilon}$, $z\in\mathrm{Res}(A[\alpha])$ and
 % ------------- %
 $$
 \big\|(A[\alpha]-z)^{-1}-(A[+\infty]-z)^{-1}\big\|<\varepsilon.
 $$
 % ------------- %
To deal with the limit $\xi\to-\infty$ we put
$\alpha=-\xi,\,M=\R\times I,\,A[\alpha]=H_{+}[-\xi],$ and
$A[+\infty]=H_{\alpha_{+}}$. The properties (i) and (ii) above are
direct consequences of Lemma~\ref{lem:top_eq}, (iii) is obvious from
the definition of $H_{+}[\xi]$, and (iv) was proved in
Sec.~\ref{sec:comp}. The limit $\xi\to+\infty$ is treated in a
similar manner.
\end{proof}

Let us denote the eigenvalues of $H_{\alpha_{+}}$, arranged in the
ascending order with the multiplicity taken into account, by
$\sigma_n(\alpha_{+})\equiv\sigma_n(\alpha_{+},B_0)$. Since the
norm-resolvent convergence implies the convergence of eigenvalues,
we see that in any neighborhood of $\sigma_n(\alpha_{+})$, there is
exactly the same number of eigenvalues of $H[\xi]$ as is the
multiplicity of $\sigma_n(\alpha_{+})$ in the spectrum of
$H_{\alpha_{+}}$, provided $\xi$ is chosen sufficiently large
negative. Similarly, in any neighborhood of $\sigma_n(\alpha_{-})$,
there is exactly the same number of eigenvalues of $H[\xi]$ as is
the multiplicity of $\sigma_n(\alpha_{-})$ in the spectrum of
$H_{\alpha_{-}}$, provided $\xi$ is positive and sufficiently large.
Moreover, if we fix $E>0$ then for all $\sigma_n(\alpha_{+})$ less
than $E$ we may choose the said neighborhoods to be disjoint and to
prove that in the remaining gaps there are no eigenvalues of
$H[\xi]$ for all $\xi$ sufficiently large negative. Again, a similar
statement holds true for large positive values of $\xi$.

In general, it may occur that the eigenvalue branches of $H[\xi]$
cross. It cannot happen, however, that a non-constant eigenvalue
branch crosses a constant branch. In fact, if there is a constant
eigenvalue branch then it has to be isolated from the rest of the
spectrum, cf. Remark~\ref{rem:spec_prop} above. Hence it makes sense
to denote it as $\lambda_m[\xi]$, since it is indeed the $m$th
eigenvalue of $H[\xi]$, with the multiplicity taken into account,
for some $m\in\N$ and all $\xi\in\R$. If it is independent of $\xi$
we would have $\lambda_m[\xi]=\sigma_m(\alpha_{+},B_0)
=\sigma_m(\alpha_{-},B_0)$; our aim now is to find a sufficient
condition under which this cannot happen.

To this aim, let us denote
 % ------------- %
$$
T_{\alpha}(s,u):=B_0^2\left(\alpha s-a\sqrt{1-\alpha^2}\, u\right)^2,
$$
 % ------------- %
and fix an $\varepsilon>0$. Then, since
 % ------------- %
$$
|a s u|\leq\frac{1}{2}\left(\varepsilon s^2+\varepsilon^{-1}a^2u^2\right),
$$
 % ------------- %
we get the inequalities
 % ------------- %
\begin{multline*}
\left(\alpha^2-\alpha\sqrt{1-\alpha^2}\,\varepsilon\right)s^2-a^2\varepsilon^{-1}\alpha\sqrt{1-\alpha^2}\leq
B_{0}^{-2}T_{\alpha}(s,u)\\ \leq
\left(\alpha^2+\alpha\sqrt{1-\alpha^2}\,\varepsilon\right)s^2+a^2\left(1-\alpha^2+\varepsilon^{-1}\alpha\sqrt{1-\alpha^2}\right).
\end{multline*}
 % ------------- %
This allows us to infer that
 % ------------- %
\begin{align*}
 &H_{\alpha_{+}}(B_0)\leq H_{1}(B_0\,g(\alpha_{+},\varepsilon))+B_0^2 a^2\left(1-\alpha_{+}^2+\varepsilon^{-1}\alpha_{+}\sqrt{1-\alpha_{+}^2}\right)\\
 &H_{\alpha_{-}}(B_0)\geq H_1(B_0\,g(\alpha_{-},-\varepsilon))-B_0^2 a^2\varepsilon^{-1}\alpha_{-}\sqrt{1-\alpha_{-}^2},
\end{align*}
 % ------------- %
where $g(\alpha,\varepsilon):= \sqrt{\alpha^2
+\alpha\sqrt{1-\alpha^2}\,\varepsilon}$.

With respect to the first terms on the right-hand sides of the above
inequalities, recall that $\sigma_m(1,B)=\lambda_{\tilde m,\tilde
n}(B)=B(2\tilde m+1)+a^{-2}E_{\tilde n}$ for some $\tilde m\in\N_0$
and $\tilde n\in\N$, where
 % ------------- %
\begin{equation} \label{eq:D_ev}
E_{\tilde n}:=\left(\frac{\tilde n\pi}{2}\right)^2.
\end{equation}
 % ------------- %
Here, $E_{\tilde n}$ is the $\tilde n$th `transverse' Dirichlet eigenvalue for
$a=1$. Furthermore, note the monotonicity with respect to the field:
if $B<\tilde B$, then
 % ------------- %
\begin{equation}\label{eq:ev_diff}
\sigma_m(1,\tilde B)-\sigma_m(1,B)\geq \tilde B-B
\end{equation}
 % ------------- %
holds for all $m\in\N$. This follows from the fact that we have
 % ------------- %
\begin{eqnarray*}
\lefteqn{|\{\lambda_{\tilde m,n}(\tilde B)|\, \exists \tilde
m\in\N_0:\, \lambda_{\tilde m,n}(\tilde B)\leq E\}|} \\[.3em] && \leq
|\{\lambda_{\tilde m,n}(B)|\, \exists \tilde m\in\N_0:\,
\lambda_{\tilde m,n}(B)\leq E-(\tilde B-B)\}|,
\end{eqnarray*}
 % ------------- %
for all $E>0$ and $n\in\N$, where $|\cdot|$ stands for the
cardinality of a set. Now using the minimax principle we obtain
 % ------------- %
\begin{align*}
 &\sigma_m(\alpha_{+}, B_0)\leq\sigma_m(1,B_0\,g(\alpha_{+},\varepsilon))+B_0^2 a^2\left(1-\alpha_{+}^2+\varepsilon^{-1}\alpha_{+}\sqrt{1-\alpha_{+}^2}\right)\\
 &\sigma_m(1,B_0\,g(\alpha_{-},-\varepsilon))\leq\sigma_m(\alpha_{-},B_0)+B_0^2 a^2\varepsilon^{-1}\alpha_{-}\sqrt{1-\alpha_{-}^2}.
\end{align*}
 % ------------- %
Combining this with \eqref{eq:ev_diff} we arrive at the following
claim.
 % ------------- %
\begin{lemma}\label{lem:ev_diff}
Let $\alpha_{-} >\alpha_{+}>0$ and
 % ------------- %
 $$
 \varepsilon_0:=\frac{\alpha_{-}^2-\alpha_{+}^2}{\alpha_{+}\sqrt{1-\alpha_{+}^2}+\alpha_{-}\sqrt{1-\alpha_{-}^2}}.
 $$
 % ------------- %
Then we have $g(\alpha_{-},-\varepsilon)>g(\alpha_{+},\varepsilon)$
for all positive $\varepsilon<\varepsilon_0$. If, in addition,
 % ------------- %
 \begin{equation*}
  a<a_0(\varepsilon):=\frac{1}{\sqrt{B_0}}\,\sqrt{\frac{g(\alpha_{-},-\varepsilon)-g(\alpha_{+},\varepsilon)}{1-\alpha_{+}^2+\varepsilon^{-1}\left(\alpha_{+}\sqrt{1-\alpha_{+}^2}+\alpha_{-}\sqrt{1-\alpha_{-}^2}\right)}},
 \end{equation*}
 % ------------- %
holds for some $\varepsilon<\varepsilon_0$, then
$\sigma_m(\alpha_{+}, B_0)<\sigma_m(\alpha_{-}, B_0)$.
\end{lemma}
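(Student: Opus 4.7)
The plan is to combine the algebraic sandwich bounds on $\sigma_m(\alpha_\pm,B_0)$ derived in the paragraph preceding the lemma with the monotonicity estimate \eqref{eq:ev_diff}. Both claims reduce to elementary algebra, with essentially no analytic content beyond what has already been assembled.

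\textbf{First claim.} For $\alpha\in(0,1]$ and $\varepsilon$ small, the quantity $g(\alpha,\varepsilon)^2=\alpha^2+\alpha\sqrt{1-\alpha^2}\,\varepsilon$ is positive, so the inequality $g(\alpha_-,-\varepsilon)>g(\alpha_+,\varepsilon)$ may be squared. This rearranges to
$$\alpha_-^2-\alpha_+^2 > \varepsilon\bigl(\alpha_+\sqrt{1-\alpha_+^2}+\alpha_-\sqrt{1-\alpha_-^2}\bigr),$$
which, given that $\alpha_->\alpha_+$, is exactly the condition $\varepsilon<\varepsilon_0$. One should also remark that $\varepsilon_0$ is small enough so that $g(\alpha_+,\varepsilon)^2>0$ throughout, legitimising the square root.

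\textbf{Second claim.} Subtracting the first of the two displayed inequalities immediately before the lemma from the second gives
$$\sigma_m(\alpha_-,B_0)-\sigma_m(\alpha_+,B_0)\geq \bigl[\sigma_m(1,B_0 g(\alpha_-,-\varepsilon))-\sigma_m(1,B_0 g(\alpha_+,\varepsilon))\bigr]-B_0^{2}a^{2}R,$$
where
$$R:=1-\alpha_+^2+\varepsilon^{-1}\bigl(\alpha_+\sqrt{1-\alpha_+^2}+\alpha_-\sqrt{1-\alpha_-^2}\bigr).$$
By the first claim, $B_0 g(\alpha_+,\varepsilon)<B_0 g(\alpha_-,-\varepsilon)$, so the monotonicity estimate \eqref{eq:ev_diff} bounds the bracketed difference from below by $B_0\bigl(g(\alpha_-,-\varepsilon)-g(\alpha_+,\varepsilon)\bigr)$. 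Hence
$$\sigma_m(\alpha_-,B_0)-\sigma_m(\alpha_+,B_0)\geq B_0\bigl(g(\alpha_-,-\varepsilon)-g(\alpha_+,\varepsilon)\bigr)-B_0^{2}a^{2}R,$$
and demanding that the right-hand side be strictly positive yields precisely the bound $a<a_0(\varepsilon)$ written in the statement.

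\textbf{Main obstacle.} There is no genuine obstacle; the lemma is a bookkeeping step that translates the two already-proved operator bounds, together with the monotonicity \eqref{eq:ev_diff}, into a numerical sufficient condition. The only delicate point is the need to secure a strictly positive gap $g(\alpha_-,-\varepsilon)-g(\alpha_+,\varepsilon)$ before inverting the $a^{2}$-dependence, which is the raison d'\^etre of introducing $\varepsilon_0$ in the first place; once that gap is in hand, the bound on $a$ follows by solving a linear inequality in $a^{2}$.
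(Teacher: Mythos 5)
Your argument is correct and coincides with the paper's (implicit) proof: the lemma is obtained exactly by squaring the inequality $g(\alpha_-,-\varepsilon)>g(\alpha_+,\varepsilon)$ to recover $\varepsilon<\varepsilon_0$, then subtracting the two displayed operator-eigenvalue bounds and invoking the monotonicity estimate \eqref{eq:ev_diff} to solve a linear inequality in $a^2$. The only tiny slip is that the positivity one must secure before squaring is that of $g(\alpha_-,-\varepsilon)^2=\alpha_-^2-\alpha_-\sqrt{1-\alpha_-^2}\,\varepsilon$ (not of $g(\alpha_+,\varepsilon)^2$, which is automatic for $\varepsilon>0$), but this follows at once from $\varepsilon<\varepsilon_0$, so the proof stands.
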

 % ------------- %
\begin{remark} \label{rem:a_crit_bound}
One cannot maximize the threshold $a_0$ with respect to
$\varepsilon\in(0,\varepsilon_0)$ analytically. However, it is
possible to find a closed-form estimate by maximizing a lower bound.
First of all, note that $0\leq\alpha_{\pm}\sqrt{1-\alpha_{\pm}^2}\leq
1/2$ holds for all $\alpha_{\pm}\in(0,1]$. Hence
$\varepsilon_0\geq\tilde\varepsilon_0:=\alpha_{-}^2-\alpha_{+}^2$,
where $\tilde\varepsilon_0<1$, and
 % ------------- %
 \begin{align*}
 \sqrt{B_0}\,a_0(\varepsilon) &\geq\sqrt{\frac{\sqrt{\alpha_{-}^2-\varepsilon/2}-\sqrt{\alpha_{+}^2+\varepsilon/2}}{2\varepsilon^{-1}}} \\
 &=\frac{1}{\sqrt{2}}\,\sqrt{\frac{\varepsilon(\alpha_{-}^2-\alpha_{+}^2-\varepsilon)}{\sqrt{\alpha_{-}^2-\varepsilon/2}+\sqrt{\alpha_{+}^2+\varepsilon/2}}}
 \geq\frac{1}{\sqrt{2}}\sqrt{\frac{\varepsilon(\alpha_{-}^2-\alpha_{+}^2-\varepsilon)}{\alpha_{-}+\sqrt{\alpha_{+}^2+1/2}}},
 \end{align*}
 % ------------- %
for all $\varepsilon<\tilde\varepsilon_0$. The bound is maximal for
$\varepsilon=(\alpha_{-}^2-\alpha_{+}^2)/2 <\tilde\varepsilon_0$, so
we arrive at
 % ------------- %
 $$
 \sup_{\varepsilon\in(0,\varepsilon_0)}a_0(\varepsilon)
 \geq \frac{\alpha_{-}^2-\alpha_{+}^2}{2\sqrt{2B_0}\sqrt{\alpha_{-}+\sqrt{\alpha_{+}^2+1/2}}}
 \geq \frac{\alpha_{-}^2-\alpha_{+}^2}{2\sqrt{2}\,\sqrt{1+\sqrt{3/2}}\,\sqrt{B_0}}.$$
 % ------------- %
\end{remark}
 % ------------- %

\medskip

By a \emph{reductio ad absurdum} we can thus make the following
conclusion.
 % ------------- %
\begin{proposition}
 Under the assumptions of Lemma \ref{lem:ev_diff}, there are no constant eigenvalue branches of $H[\xi]$. Therefore, the spectrum of $-\Delta_{D,A}^\Omega$ is purely absolutely continuous.
\end{proposition}
 % ------------- %

%%%%%%%%%%%%%%%%%%%%%%%%%%%%%%%%%%%%%%%%%%%
\subsection{Thin layers} \label{sec:thin}

The result of Sec.~\ref{sec:bent} involved already a restriction on
the layer thickness possibly going beyond the assumption \eqref{eq:diff_cond},
the severity of which depended on how much the layer was `broken'.
Now we will go further and look what sufficient condition can
be derived if the layer is very thin.

To begin with, recall that it was proved in Ref.~\onlinecite{KrRaTu_15} that if, in addition to \eqref{eq:curv_bound} and \eqref{eq:diff_cond}, 
 %---------------%
 \begin{equation} \label{eq:conv_ass} 
 \dot\kappa,\, \ddot\kappa\in L^\infty 
 \end{equation}
then for any $k$ large enough,
 % ------------- %
$$
\big\|\big(\tilde{H}-a^{-2}E_1+k\big)^{-1}-(\tilde
h_{\mathrm{eff}}+k)^{-1}\oplus 0\big\|=\mathcal{O}(a)
$$
 % ------------- %
holds as $a\to 0+$, with
 % ------------- %
\begin{align*}
 \tilde h_{\mathrm{eff}}=-\partial_{s}^{2}+(-i\partial_{y}+B_{0}x(s))^2-\frac{1}{4}\kappa^{2}(s)
\end{align*}
 % ------------- %
acting on $L^2(\R^2,\dd s\dd y)$. Recall that $E_1$ is given by \eqref{eq:D_ev}. Also remark that \eqref{eq:curv_bound} combined with \eqref{eq:conv_ass} yields $V\in L^\infty$, i.e., the second part of \eqref{eq:V_bound}. Since we assume that the curvature
$\kappa$ is bounded, $\tilde{h}_{\mathrm{eff}}$ is essentially
self-adjoint on $C_{0}^{\infty}(\R^2)$ and self-adjoint on
 % ------------- %
$$
\text{Dom}~\tilde{h}_{\mathrm{eff}}=\big\{ f\in L^2(\R^2,\dd s\dd
y) \mid -\partial_{s}^{2}f+(-i\partial_{y}+B_{0}x(s))^2 f\in
L^2(\R^2,\dd s\dd y) \big\},
$$
 % ------------- %
see Ref.~\onlinecite{LeSi_81}.

 % ------------- %
\begin{remark}[Magnetic field]
The vector potential in the operator $\tilde h_{\mathrm{eff}}$ is
$\tilde{A}_{\mathrm{eff}}=(0,B_{0}x(s))$, and consequently,
 % ------------- %
$$
\tilde{B}_{\mathrm{eff}}=\mathrm{curl}\tilde
{A}_{\mathrm{eff}}=B_{0}\dot x(s)=B\cdot n.
$$
 % ------------- %

 % ------------- %
\end{remark}
 % ------------- %

\medskip

\noindent  Using the partial Fourier--Plancherel transform in the $y$ variable,
we turn the operator $\tilde h_{\mathrm{eff}}$ into
 % ------------- %
\begin{equation*}
h_\mathrm{eff}:=-\partial_{s}^{2}+(\xi+B_{0}x(s))^2-\frac{1}{4}\kappa^{2}(s).
\end{equation*}
 % ------------- %
It is self-adjoint on its
definition domain,
 % ------------- %
$$
\text{Dom}~h_{\mathrm{eff}}=\big\{ f\in L^2(\R^2,\dd s\dd \xi) \mid
-\partial_{s}^{2}f+(\xi+B_{0}x(s))^2 f\in L^2(\R^2,\dd s\dd \xi)\big\},
$$
 % ------------- %
and decomposes into a direct integral,
 % ------------- %
$$
h_{\mathrm{eff}}=\int^\oplus_\R h_{\mathrm{eff}}[\xi]\dd \xi,
$$
 % ------------- %
where the fiber is given by
 % ------------- %
$$
h_{\mathrm{eff}}[\xi]=-\partial_{s}^{2}+(\xi+B_{0}x(s))^2-\frac{1}{4}\kappa^{2}(s)
$$
 % ------------- %
as an operator on $L^2(\R,\dd s)$. Using Ref.~\onlinecite[Thm 6.3]{KrRaTu_15},
we obtain further
 % ------------- %
$$
\big\|(\tilde H-a^{-2}E_1 +k)^{-1}-(\tilde H_0-a^{-2}E_1 +k)^{-1}\big\|= \mathcal{O}(a)
$$
 % ------------- %
as $a\to 0+$, where $\tilde H_0$ is the `leading term',
 % ------------- %
$$
\tilde H_0:=\tilde{h}_{\mathrm{eff}}\otimes I+I\otimes (-a^{-2}\partial_{u}^{2}),
$$
 % ------------- %
and $k$ has to be, of course, chosen large enough. In view of the
unitarity of the Fourier--Plancherel transform, this implies
 % ------------- %
\begin{equation*}
 \|(H-a^{-2}E_1 +k)^{-1}-(H_0-a^{-2}E_1 +k)^{-1}\|= \mathcal{O}(a),
\end{equation*}
 % ------------- %
where $H_0$ is defined similarly as $\tilde H_0$ but now with the help of
$h_{\mathrm{eff}}$. Since this operator also decomposes into a
direct integral,
 % ------------- %
$$
H_0=\int^\oplus_\R \left(h_{\mathrm{eff}}[\xi]\otimes I+I\otimes (-a^{-2}\partial_{u}^{2})\right)\,\dd\xi=:\int^\oplus_\R  H_0[\xi]\dd\xi,
$$
 % ------------- %
we obtain the corresponding limiting relation for the fibers,
 % ------------- %
\begin{equation}\label{eq:thin_conv}
 \|(H[\xi]-a^{-2}E_1 +k)^{-1}-(H_0[\xi]-a^{-2}E_1 +k)^{-1}\|= \mathcal{O}(a)
\end{equation}
 % ------------- %
as $a\to 0+$. This follows from the fact that $\|\int^{\oplus}_{M}
A[\xi]\|=\esssup_{M}\|A[\xi]\|$, cf.~Ref.~\onlinecite[Thm XIII.83]{RS4}, which
also implies, in particular, that the error term on the right-hand
side of \eqref{eq:thin_conv} is uniform in $\xi\in\R$.

Assume that the operator $h_{\mathrm{eff}}[\xi]$ has compact
resolvent and all its eigenvalues are simple and analytic in $\xi$.
This is fulfilled if, for instance, in addition to
\eqref{eq:curv_bound}, which is sufficient for analyticity, we have
$\underline{\dot x}_\pm>0$, which beside compactness of the resolvent assures simplicity of the spectrum\cite{Tu_16}. Here we employ for the
sake of brevity the notation
 % ------------- %
\begin{align*}
 \underline{f}_{+}&:=\sup_{a\in\R}\essinf_{t\in(a,+\infty)}f(t) & \overline{f}_{+}&:=\inf_{a\in\R}\esssup_{t\in(a,+\infty)}f(t)\\
 \underline{f}_{-}&:=\sup_{a\in\R}\essinf_{t\in(-\infty,a)}f(t) & \overline{f}_{-}&:=\inf_{a\in\R}\esssup_{t\in(-\infty,a)}f(t).
\end{align*}
 % ------------- %
for a given $f\in L^\infty(\R;\R)$. We denote the eigenvalues of
$h_{\mathrm{eff}}[\xi]$, arranged in the ascending order, as
$\nu_m[\xi],\, m\in\N$. Assume that they are non-constant as
functions of $\xi$. 

Under the stated assumptions on $h_{\mathrm{eff}}[\xi]$, the spectrum of $H_0
[\xi]-a^{-2}E_1$ consists of isolated eigenvalues
 % ------------- %
$$
\gamma_{m,n}[\xi]=\nu_m[\xi]+a^{-2}(E_n-E_1).
$$
 % ------------- %
By the minimax principle, $\nu_m[\xi]\geq
-\frac14\,\|\kappa^2\|_\infty$. We fix an energy value $E\in\R$
which we will refer to for brevity as threshold, then there exists
an $a_E>0$ such that
 % ------------- %
$$
E< a^{-2}(E_2-E_1)-\frac{1}{4}\|\kappa^2\|_\infty
$$
 % ------------- %
holds for all $a< a_E$. Consequently, for these values of $a$, the spectrum of
$H_0[\xi]-a^{-2}E_1$ strictly below $E$ consists of
the simple eigenvalues $\gamma_{m,1}[\xi]=\nu_m[\xi],\,
m=1,2,\ldots, N[\xi]$, only. Note that $\max_{\xi\in\R}N[\xi]=:
N_E<+\infty$.

There exist at least one compact interval $I_E$ and a $\delta_E>0$
such that $\nu_{N_E}[\xi]<E-3\delta_E$ holds for all $\xi\in I_E$,
because $\nu_{N_E}[\xi]$ is by assumption non-constant and analytic.
For a fixed $m=1,\ldots, N_E$ we can then construct a tubular
neighborhood $\mathcal{T}_m(\delta):=\{\nu_m[\xi]+t|\: \xi\in I_m,\,
t\in (-\delta,\delta)\}$ with $\delta<\min\{\delta_E,
\tilde{\delta}_E\}$, where
 % ------------- %
$$\tilde{\delta}_E:= \frac14\, \min_{m=1,\ldots,N_E}\inf_{\xi\in I_E}\mathrm{dist}\left(\nu_m[\xi],
\sigma(H_0[\xi]-a^{-2}E_1)\right)
$$
 % ------------- %
is strictly positive. Furthermore, one can find an $\tilde{a}_E\in
(0,a_E)$ such that for all $a < \tilde{a}_E$ there is exactly one
eigenvalue branch of $H-a^{-2}E_1$ passing through each of the neighborhoods
$\mathcal{T}_m(\delta)$, $m=1,\ldots, N_E$. Since $\nu_m[\xi]$ are
non-constant, these eigenvalue branches must be non-constant as well, if we choose $\delta$ and consequently
also $\tilde{a}_E$ small enough.

Assume that there is a constant eigenvalue branch of
$H-a^{-2}E_1$ below $E-\delta_E$. Then it must be, in
particular, constant in the interval $I_E$, and thus it could not
intersect with any of $\mathcal{T}_m(\delta)$, provided we chose
$\delta$ and $a$ as above. Moreover, by an easy perturbation theory
consideration there are no eigenvalues of $H[\xi]-a^{-2}E_1$ in the
remaining gaps whenever $a$ is small enough. From this we can
conclude that for any fixed threshold $E$, all the eigenvalue branches
of $H$ that lie (at least partially) below
$E+a^{-2}E_1$ are non-constant provided the layer halfwidth $a$ is
sufficiently small.

In the argument above, it was crucial that all the $\nu_m$'s were non-constant as functions of $\xi$. 
A sufficient condition for this may be found in Ref.~\onlinecite{Tu_16},
 % ------------- %
\begin{equation}\label{eq:eff_suff_cond_1}
\underline{\dot{x}}_{\pm}>0\,\wedge\,\underline{\dot{x}}_{+}\geq\overline{\dot{x}}_{-}\,\wedge\,\left(\underline{\kappa^2}_{+}-\overline{\kappa^2}_{-}<4
B_0(\underline{\dot{x}}_{+}-\overline{\dot{x}}_{-})\right).
\end{equation}
 % ------------- %
(Alternatively, one can change the $\pm$ indices to $\mp$ everywhere in \eqref{eq:eff_suff_cond_1}.) One can think, e.g.,  about  a layer that is asymptotically flat with different magnitudes of the asymptotic slopes at $x=\pm\infty$. 

Now, we are going to derive another condition for non-constancy of $\nu_m$'s.
Let us assume that, in addition to \eqref{eq:curv_bound},
 % ------------- %
\begin{equation}\label{eq:eff_suff_cond_2}
x(s)=s \quad\text{for } s\leq 0 ;\quad\dot{x}(s)\geq 0
\quad\text{for } s>0;\quad  \underline{\dot{x}}_{+}>0;\quad x\neq
\mathrm{Id}.
\end{equation}
 % ------------- %
It is convenient to write $x(s)=s+r(s)$, where $r(s)=\theta(s)r(s)$
with $\theta$ being the Heaviside step function. Clearly, $r(s)\leq 0$ and it is not identically
zero. Note that this
includes any perturbation of the planar layer that is compact in the
$x$ direction, since without loss of generality we may always
suppose that such a perturbation is supported to the right of the
origin. (See Fig.~\ref{fig:thin} for examples.)
 % ------------ %
\begin{figure}[ht] 
\centering
\includegraphics[scale=0.7]{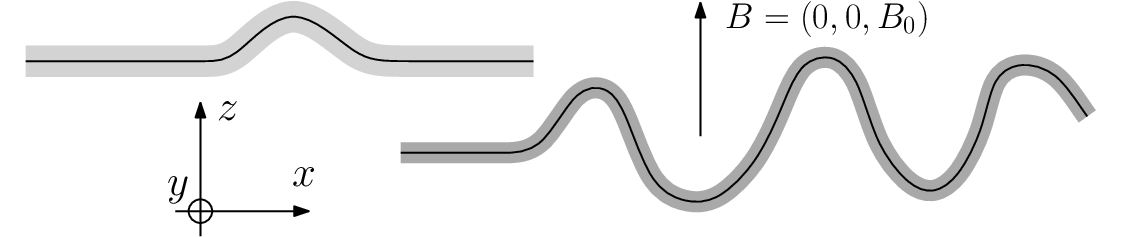} 
\caption{Examples of layers obeying \eqref{eq:eff_suff_cond_2}.} \label{fig:thin}
\end{figure}
 % ------------ %
We have
 % ------------- %
\begin{multline} \label{eq:HO_pert_decomp}
h_{\mathrm{eff}}[\xi]=-\partial_{s}^{2}+(\xi+B_0 s)^2+2B_0(\xi+B_0 s)r(s)+B_0^2r^2(s)-\frac{1}{4}\kappa^2(s)\\
=h_{\text{HO}}[\xi]+\theta(s)\left(B_0 r(s)\left(2\xi+B_0(s+x(s))\right)-\frac{1}{4}\kappa^2(s)\right).
\end{multline}
 % ------------- %
For any $\xi\geq 0$, $h_{\mathrm{eff}}[\xi]$ is thus a non-positive
perturbation of the shifted harmonic oscillator Hamiltonian
$h_{\text{HO}}[\xi]$. Let us estimate the eigenvalue $\nu_m[0]$.

Let $\psi_j$ be the $j$th eigenfunction of the harmonic oscillator
Hamiltonian $h_{\text{HO}}[0]$ given by \eqref{eq:HOef} and
$S_m:=\text{span}\{\psi_j|\: j=1,\ldots,m\}$; note that every
function in $S_m$ is real analytic on $\R$. Now, by the minimax
principle,
 % ------------- %
\begin{multline} \label{eq:minimax_est}
 \nu_m[0]=\min_{\substack{S\subset \dom{h_{\mathrm{eff}}[0]}\\\dim S=m}}\ \max_{\substack{\psi\in S \\ \|\psi\|=1}}\langle\psi,h[0]\psi\rangle\leq \max_{\substack{\psi\in S_m \\ \|\psi\|=1}}\langle\psi,h[0]\psi\rangle\\
 \leq \max_{\substack{\psi\in S_m \\ \|\psi\|=1}}\langle\psi,h_{\text{HO}}[0]\psi\rangle+\max_{\substack{\psi\in S_m \\ \|\psi\|=1}}\langle\psi,\theta(s)(B_0^2 r(s)(s+x(s))-\frac14\kappa^2(s))\psi\rangle\\
 =B_0(2m+1)+\max_{\substack{\psi\in S_m \\ \|\psi\|=1}}\int_{0}^{+\infty}\big(B_0^2 r(s)(s+x(s))-\frac14\kappa^2(s)\big)|\psi(s)|^2\dd s.
\end{multline}
 % ------------- %
The last term on the right-hand side is negative, because the
sub-integral function is non-positive everywhere and strictly
negative on some interval, and the maximum of the integral is
attained for some $\psi_{\text{max}}\in S_m$. Indeed, if the maximum
was zero then $\psi_{\text{max}}$ would be zero on the mentioned
interval, and therefore due to the analyticity it would vanish on
$\R$, which is a contradiction. We conclude that the sharp
inequality $\nu_m[0]<B_0(2m+1)$ holds for all $m\in\N$.

On the other hand, we have $\lim_{\xi\to +\infty}\nu_m[\xi]
=B_0(2m+1)$. To prove this claim we start with the unitary transform
$U_\xi:\psi(s)\mapsto\psi(s-\xi/B_0)$ and introduce
 % ------------- %
\begin{equation*}
 \hat{h}_{\mathrm{eff}}[\xi]=U_\xi h_{\mathrm{eff}}[\xi] U_{\xi}^{-1}=\begin{cases}
                                         -\partial_{s}^{2}+B_{0}^{2}s^2 & s\leq
                                         \xi/B_0\\[.7em]
                                         \begin{aligned}
                                         &-\partial_{s}^{2}+(\xi+B_0 x(s-\xi/B_0))^2 \\ &-\frac{1}{4}\kappa^2(s-\xi/B_0)\end{aligned} & s>\xi/B_0
                                        \end{cases}
\end{equation*}
 % ------------- %
and put $\hat{h}_{\mathrm{eff}}[+\infty]:=h_{\text{HO}}[0]$. Now we
may apply the result of Ref.~\onlinecite[Thm 2.3]{Tu_16}, which we have
reproduced here as a part of the proof of
Proposition~\ref{prop:res_conv}, to the family
$\{\hat{h}_{\mathrm{eff}}[\xi]|\: \xi\in(-\infty,+\infty]\}$. Let us
focus on the assumption (ii) of the theorem. For all $s$
sufficiently large we have $\dot{x}(s)>\frac12
\underline{\dot{x}}_{+}$, and consequently, there is a $\xi_0>0$
such that for all $\xi>\xi_0$ and $s>0$,
 % ------------- %
$$
\xi+B_0 x(s-\xi/B_0)>\frac12\, B_0 \underline{\dot{x}}_{+} s.
$$
 % ------------- %
Using this estimate on the interval $(\xi/B_0,+\infty)$, we obtain
 % ------------- %
\begin{equation*}
 \frac{\underline{\dot{x}}_{+}^{2}}{4}\, \hat{h}_{\mathrm{HO}}[0]-\frac{1}{4}\|\kappa\|_{\infty}^{2}\leq \hat{h}_{\mathrm{eff}}[\xi].
\end{equation*}
 % ------------- %
The remaining assumptions are easy to verify. This makes it possible
to infer that
 % ------------- %
$$
\lim_{\xi\to+\infty}\|\hat{h}_{\mathrm{eff}}[\xi]^{-1}-h_{\mathrm{HO}}[0]^{-1}\|=0,
$$
 % ------------- %
which in turn implies that $\lim_{\xi\to +\infty}\nu_m[\xi]$ is just
the $m$th eigenvalue of $h_{\text{HO}}[0]$. Our findings are
summarized in the following claim.
 % ------------- %
\begin{proposition}\label{prop:thin}
Let the assumptions \eqref{eq:curv_bound}, \eqref{eq:diff_cond}, and \eqref{eq:conv_ass} hold together with either
\eqref{eq:eff_suff_cond_1} or \eqref{eq:eff_suff_cond_2}. Then to
any $E\in\R$ one can find an $a_E>0$ such that no eigenvalue branch of
the total Hamiltonian $H$ that lies at least partially below
$E+a^{-2}E_1$ can be constant as a function of $\xi$ whenever
$a<a_E$.
\end{proposition}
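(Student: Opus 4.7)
The strategy is to combine three ingredients already assembled above: the uniform norm-resolvent convergence \eqref{eq:thin_conv}, the tensor-sum structure of $H_0[\xi]$, and the non-constancy of the bands $\nu_m[\xi]$ of the effective fibered operator. Together these will allow one to rule out constant eigenvalue branches of $H$ in any prescribed spectral window of the form $(-\infty, E+a^{-2}E_1)$, provided $a$ is chosen small enough.

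First I would fix $E\in\R$ and pick $a_E>0$ so small that $a^{-2}(E_2-E_1)-\tfrac{1}{4}\|\kappa^2\|_\infty>E$ for all $a<a_E$. For such $a$ the part of $\sigma(H_0[\xi]-a^{-2}E_1)$ lying strictly below $E$ consists exactly of the simple eigenvalues $\gamma_{m,1}[\xi]=\nu_m[\xi]$ with $m\le N[\xi]$, since the Dirichlet gap pushes all transverse modes with $n\ge 2$ above $E$; moreover $N:=\max_\xi N[\xi]$ is finite. Next I must establish that each such $\nu_m[\cdot]$ is non-constant. Under the condition \eqref{eq:eff_suff_cond_1} this is the earlier result from \cite{Tu_16}. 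Under the new condition \eqref{eq:eff_suff_cond_2}, the decomposition \eqref{eq:HO_pert_decomp} writes $h_{\mathrm{eff}}[\xi]$ as a non-positive perturbation of a shifted harmonic oscillator; the minimax estimate \eqref{eq:minimax_est} gives the strict bound $\nu_m[0]<B_0(2m+1)$, while the family $\hat h_{\mathrm{eff}}[\xi]$ together with \cite[Thm 2.3]{Tu_16} yields $\lim_{\xi\to+\infty}\nu_m[\xi]=B_0(2m+1)$; the two facts together forbid constancy.

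Having non-constant analytic bands at my disposal, I would carry out the tubular neighborhood construction sketched above: choose a compact interval $I_E$ and a $\delta_E>0$ so that $\nu_N[\xi]<E-3\delta_E$ on $I_E$, and shrink $\delta$ below both $\delta_E$ and the resolvent-gap quantity $\tilde\delta_E$. The uniformity in $\xi$ of the $\mathcal{O}(a)$ estimate \eqref{eq:thin_conv} (via $\|\!\int^\oplus A[\xi]\|=\esssup_\xi\|A[\xi]\|$) then delivers some $\tilde a_E\in(0,a_E)$ such that whenever $a<\tilde a_E$, precisely one eigenvalue branch of $H[\xi]-a^{-2}E_1$ threads each of the $N$ tubes $\mathcal{T}_m(\delta)$, and outside these tubes no further eigenvalues of $H[\xi]-a^{-2}E_1$ can appear in the spectral slab below $E-\delta_E$. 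Each branch tracked inside a tube inherits non-constancy from $\nu_m[\xi]$ over $I_E$.

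The point requiring the most care, and the step I expect to be the main obstacle, is ruling out a \emph{constant} branch of $H$ below $E+a^{-2}E_1$ that does not coincide with any of the tracked branches. By Remark~\ref{rem:spec_prop} such a branch would be isolated from the rest of the spectrum, hence it could not meet any $\mathcal{T}_m(\delta)$; but the norm-resolvent convergence just established leaves no room for additional eigenvalues of $H[\xi]-a^{-2}E_1$ in the complement of $\bigcup_m\mathcal{T}_m(\delta)$ below $E$ once $a$ is small enough. The resulting contradiction shows that every eigenvalue branch of $H$ dipping below $E+a^{-2}E_1$ is non-constant, which is the claim of the proposition.
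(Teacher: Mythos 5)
Your proposal is correct and follows essentially the same route as the paper: the uniform norm-resolvent estimate \eqref{eq:thin_conv}, non-constancy of the effective bands $\nu_m[\xi]$ (via \cite{Tu_16} under \eqref{eq:eff_suff_cond_1}, or via the minimax bound $\nu_m[0]<B_0(2m+1)$ combined with $\lim_{\xi\to+\infty}\nu_m[\xi]=B_0(2m+1)$ under \eqref{eq:eff_suff_cond_2}), the tubular-neighborhood tracking of the $N_E$ low-lying branches, and the exclusion of a hypothetical constant branch because it would be isolated yet would have nowhere to live outside the tubes. No gaps to report.
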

 % ------------- %

\begin{corollary}
 Under the assumptions of Proposition \ref{prop:thin}, for any $E\in\R$ there exists $a_E>0$ such that if $a<a_E$ then the spectral measure of $-\Delta_{D,A}^\Omega$ on $(-\infty,E+a^{-2}E_1)$ is purely absolutely continuous, i.e., the spectrum of $-\Delta_{D,A}^\Omega$ below $E+a^{-2}E_1$ is absolutely continuous. (Note that without shifting $E$ by $a^{-2}E_1$ the result would be void because $\inf\sigma(-\Delta_{D,A}^\Omega)=a^{-2}E_1+\mathcal{O}(1)$ as $a\to 0+$.)
\end{corollary}

\begin{remark}\label{rem:eff}
 Let us stress that, in Proposition \ref{prop:thin}, instead of \eqref{eq:eff_suff_cond_1} or \eqref{eq:eff_suff_cond_2} one could impose any other condition that would imply that at least those eigenvalues of $h_\mathrm{eff}[\xi]$ which lie below $E$ are not constant as functions of $\xi$.
\end{remark}

%%%%%%%%%%%%%%%%%%%%%%%%%%%%%%%%%%%%%%%%%%%%%
\section{ An extension of the Iwatsuka model} \label{s: iwatsuka}

While our main interest concerns magnetic transport in the Dirichlet
layers, the considerations at the end of Sec.~\ref{sec:thin}, in particular,
the decomposition of the type \eqref{eq:HO_pert_decomp} can be in
combination with the minimax principle applied also to the classic
Iwatsuka model. We start with the two-dimensional Hamiltonian
 % ------------- %
 \begin{equation} \label{eq:Iwatsuka}
  h_{\mathrm{Iw}}=-\partial^2_x+(-i\partial_y+A_y(x))^2+W(x),
 \end{equation}
 % ------------- %
 where
 % ------------- %
 $$
 A_y(x)=\int_0^x B(t)\dd t.
 $$
 % ------------- %
Fix a $B_0>0$ and assume that $B(t)=B_0(1+b(t))$ with
 % ------------- %
 \begin{enumerate}[(i)]
  \item \label{ass:1}$b\in L^2_{\text{loc}}(\R)$,
  \item $b(t)=0$ for all $t<0$,
  \item \label{ass:3} $\int_0^x b(t)\dd t\leq 0$ holds for all $x\geq 0$,
  \item \label{ass:4} there are $\alpha\in(-1,0)$, $x_1\geq 0$ such that $\int_0^x
b(t)\dd t>\alpha x$ holds for all $x\geq x_1$.
 \end{enumerate}
 % ------------- %
The potential $W\in L^\infty(\R) \cap C(\R)$ is such that
$W(x)=\theta(x)W(x)\leq 0$. Under the stated integrability
assumptions on $b$ and $W$, the operator $h_{\mathrm{Iw}}$ is essentially
self-adjoint on $C_0^\infty (\R^2)$\cite{LeSi_81}.

 % ------------- %
\begin{theorem} \label{theo:iwatsuka}
 Adopt the above assumptions together with $b\not\equiv 0 \vee W\not\equiv 0$, then $h_{\mathrm{Iw}}$ is purely absolutely continuous.
\end{theorem}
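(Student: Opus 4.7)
My plan follows the three-step scheme used throughout Section~\ref{s:ac}. The partial Fourier--Plancherel transform $\mathscr{F}_{y\to\xi}$ decomposes $h_{\mathrm{Iw}}$ into a direct integral $\int^{\oplus}_{\R}h_{\mathrm{Iw}}[\xi]\,\dd\xi$ with fibers
\begin{equation*}
h_{\mathrm{Iw}}[\xi]=-\partial_{x}^{2}+(\xi+A_{y}(x))^{2}+W(x)
\end{equation*}
on $L^{2}(\R,\dd x)$. By the argument of Remark~\ref{rem:spec_prop}, pure absolute continuity of $h_{\mathrm{Iw}}$ reduces to verifying, for each fiber: (a)~analyticity of $\xi\mapsto h_{\mathrm{Iw}}[\xi]$ in Kato's sense, (b)~compactness of the resolvent, and (c)~non-constancy of every eigenvalue branch.

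Property (a) is immediate from the type-(B) argument of Section~\ref{s:ac}: the $\xi$-dependence is polynomial of degree two and $W\in L^{\infty}$ is infinitesimally form-bounded. For (b), the potential $(\xi+A_{y}(x))^{2}$ diverges as $x\to-\infty$, where $A_{y}(x)=B_{0}x$, and as $x\to+\infty$, where assumption~(iv) yields $A_{y}(x)>B_{0}(1+\alpha)x$ with $1+\alpha>0$; together with bounded $W$ this triggers the harmonic-oscillator comparison of Section~\ref{sec:comp}.

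The heart of the proof is property~(c), which I plan to establish in the spirit of Proposition~\ref{prop:thin} by combining an asymptotic limit with a strict bound at a single point. For the limit I introduce the shift $U_{\xi}\psi(x):=\psi(x+\xi/B_{0})$ and set $\hat{h}[\xi]:=U_{\xi}^{-1}h_{\mathrm{Iw}}[\xi]U_{\xi}$; a direct computation yields
\begin{equation*}
\hat{h}[\xi]=-\partial_{z}^{2}+\begin{cases}B_{0}^{2}z^{2}, & z\leq\xi/B_{0},\\ B_{0}^{2}\bigl(z+\tilde{r}(z-\xi/B_{0})\bigr)^{2}+W(z-\xi/B_{0}), & z>\xi/B_{0},\end{cases}
\end{equation*}
where $\tilde{r}(t):=\int_{0}^{t}b(s)\,\dd s$. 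Setting $\hat{h}[+\infty]:=h_{\mathrm{HO}}:=-\partial_{z}^{2}+B_{0}^{2}z^{2}$, the hypotheses of \cite[Thm~2.3]{Tu_16} are verified as in the proof of Proposition~\ref{prop:res_conv}: essential self-adjointness on $C_{0}^{\infty}(\R)$ and compactness of $(h_{\mathrm{HO}}+k)^{-1}$ are standard; agreement of $\hat{h}[\xi]$ with $h_{\mathrm{HO}}$ on any fixed compact subset of $\R$ for $\xi$ sufficiently large is visible from the formula above; and a uniform lower bound $\hat{h}[\xi]\geq C\,h_{\mathrm{HO}}-K$ follows from the Peter--Paul inequality $(z+\tilde{r})^{2}\geq(1-\varepsilon)z^{2}-\varepsilon^{-1}\tilde{r}^{2}$ combined with~(iv), which controls $\tilde{r}$ on large arguments. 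Norm-resolvent convergence then gives $\nu_{m}[\xi]\to\lambda_{m}:=(2m-1)B_{0}$ as $\xi\to+\infty$.

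For the strict bound at $\xi=0$ I use the decomposition $h_{\mathrm{Iw}}[0]=h_{\mathrm{HO}}+V_{0}$ with $V_{0}(x):=\theta(x)\bigl[B_{0}^{2}\tilde{r}(x)(2x+\tilde{r}(x))+W(x)\bigr]$, patterned after~\eqref{eq:HO_pert_decomp}. The minimax principle applied to $S_{m}:=\mathrm{span}\{\psi_{1},\dots,\psi_{m}\}$, the first $m$ Hermite eigenstates of $h_{\mathrm{HO}}$, gives
\begin{equation*}
\nu_{m}[0]\leq\lambda_{m}+\max_{\psi\in S_{m},\,\|\psi\|=1}\int_{\R}V_{0}(x)\,|\psi(x)|^{2}\,\dd x.
\end{equation*}
Assumption~(iv) secures $\tilde{r}(x)>\alpha x>-2x$ on $[x_{1},\infty)$, hence $\tilde{r}(2x+\tilde{r})\leq 0$ and $V_{0}\leq 0$ there, while real analyticity of the Hermite functions prevents any $\psi\in S_{m}$ from vanishing on an open set. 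When $W\not\equiv 0$ the $W$-contribution is strictly negative, and by compactness of the unit sphere of $S_{m}$ it is bounded away from zero; when $b\not\equiv 0$ and $W\equiv 0$, an analogous argument uses the strictly negative term $B_{0}^{2}\tilde{r}(2x+\tilde{r})$ on $[x_{1},\infty)$ whenever $\tilde{r}\not\equiv 0$ there, and in the residual situation where $\tilde{r}$ is supported only in $[0,x_{1}]$ a trial function centred at a point where the local field differs from $B_{0}$ delivers the required reduction via a local harmonic-approximation argument. The possibly positive contribution of $V_{0}$ on the bounded set $[0,x_{1}]$ is controlled by continuity and boundedness of $\tilde{r}$ there, consequences of $b\in L^{2}_{\mathrm{loc}}$. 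Combined with the limit $\nu_{m}[\xi]\to\lambda_{m}$, real analyticity of $\nu_{m}$ then rules out constancy. The main technical obstacle is precisely the sign control of $V_{0}$ on $[0,x_{1}]$; everything else is a routine adaptation of Sections~\ref{s:ac} and~\ref{s:classes}.
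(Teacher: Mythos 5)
Your overall architecture coincides with the paper's: the same direct-integral reduction to (a) analyticity, (b) compact resolvent, (c) non-constancy of eigenvalue branches; the same decomposition $h_{\mathrm{Iw}}[\xi]=h_{\text{HO}}[\xi]+\theta(x)\bigl(B_0 r(x)(2\xi+B_0(2x+r(x)))+W(x)\bigr)$; the same $\xi\to+\infty$ limit via the shifted operators and \cite[Thm 2.3]{Tu_16}; and the same minimax estimate over the span of the first $m$ Hermite eigenfunctions combined with real analyticity of elements of $S_m$. However, there is a genuine gap exactly where you flag "the main technical obstacle": you perform the minimax estimate at the fiber $\xi=0$, and at that value the sign of the perturbation on $[0,x_1]$ genuinely cannot be controlled. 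Assumption (iii) only gives $\tilde r\leq 0$ there, while $2x+\tilde r(x)$ can be negative on parts of $[0,x_1]$ (e.g.\ near $x=0$ one only has $\tilde r(x)\geq -C\sqrt{x}$ from $b\in L^2_{\mathrm{loc}}$, so $2x+\tilde r(x)<0$ for small $x$ is possible), making $\tilde r\,(2x+\tilde r)>0$ there. "Continuity and boundedness of $\tilde r$ on $[0,x_1]$" does not repair this: a positive bounded contribution on $[0,x_1]$ can cancel or exceed the negative contribution elsewhere, and then the minimax bound fails to produce the strict inequality $\nu_m<\lambda_m$ needed to contradict constancy. Your fallback for the case $W\equiv 0$ with $\tilde r$ supported in $[0,x_1]$ ("a trial function centred at a point where the local field differs from $B_0$\dots via a local harmonic-approximation argument") is not an argument at all in this setting.

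The paper's resolution is a single, decisive choice you are missing: do not work at $\xi=0$ but at $\xi=-\tfrac12 B_0 R$, where $R:=\min_{x\in[0,x_1]}r(x)\leq 0$. Then $2\xi+B_0(2x+r(x))=B_0(2x+r(x)-R)\geq 0$ for \emph{all} $x\geq 0$ (on $[0,x_1]$ because $r(x)\geq R$, and on $[x_1,\infty)$ by (iv)), so that together with $r\leq 0$ and $W\leq 0$ the entire perturbing term in the decomposition is non-positive on $[0,\infty)$, and strictly negative on some interval precisely because $b\not\equiv 0\vee W\not\equiv 0$. The minimax argument (with $S_m$ built from the eigenfunctions of $h_{\text{HO}}[-B_0R/2]$) then yields the strict inequality $\lambda_m[-B_0R/2]<B_0(2m+1)$ at that specific $\xi$, which combined with the limit at $+\infty$ and analyticity rules out constant branches. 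With this modification your proof closes; without it, step (c) is not established.
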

 % ------------- %
 \begin{proof}
As in the seminal paper of Iwatsuka \cite{Iw_85} we start with a
direct integral decomposition into fiber operators
 % ------------- %
  \begin{equation*}
   h_{\mathrm{Iw}}[\xi]=-\partial^2_x+(\xi+A_y(x))^2+W(x)
  \end{equation*}
 % ------------- %
on $L^2(\R,\dd x)$. In the same manner as in Ref.~\onlinecite{Tu_16} we show
that $h_{\mathrm{Iw}}[\xi]$ has compact resolvent and all its eigenvalues numbered
in the ascending order as $\lambda_m[\xi],\, m\in\N,$ are simple and
analytic on $\R$ as functions of $\xi$. To prove the absolute
continuity of $h_{\mathrm{Iw}}$ it suffices to demonstrate that no
$\lambda_m[\cdot]$ is constant. We have
 % ------------- %
  \begin{equation} \label{eq:Iw_ext_decomp}
   h_{\mathrm{Iw}}[\xi]=h_{\text{HO}}[\xi]+\theta(x)\Big(B_0 r(x)\big(2\xi+B_0(2x+r(x))\big)+W(x)\Big),
  \end{equation}
 % ------------- %
  where $h_{\text{HO}}[\xi]$ was introduced in $\eqref{eq:HO_pert_decomp}$ and
 % ------------- %
  \begin{equation*}
   r(x):=\int_0^x b(t)\dd t.
  \end{equation*}
 % ------------- %
Note that $r(x)$ is in view of \eqref{ass:1} (absolutely) continuous. Using
\eqref{ass:4}, we find
 % ------------- %
$$
2\xi+B_0(2x+r(x))> 2\xi+B_0\big((2+\alpha)x+R\big)
$$
 % ------------- %
for all $x\geq 0$, where $R:=\min_{x\in[0,x_1]} r(x)\leq 0$. Hence,
if we set $\xi=-\frac12\,B_0 R$ then
$2\xi+B_0(2x+r(x))=B_0(2x+r(x)-R)\geq 0$ holds for all $x\geq 0$.
Taking the non-positivity of $r$ and $W$ into the account, we
conclude that the second term in \eqref{eq:Iw_ext_decomp} is also
non-positive. Moreover, the assumption $b\not\equiv 0 \vee
W\not\equiv 0$ implies that it is strictly negative on some
interval.

Mimicking the estimates in \eqref{eq:minimax_est} with $S_m$ being
the span of first $m$ eigenfunctions of $h_\text{HO}[-B_0R/2]$, we arrive at
$\lambda_m[-B_0R/2]<B_0(2m+1)$. Due to \eqref{ass:4}, the second
assumption of Theorem 2.3 in Ref.~\onlinecite{Tu_16} is fulfilled which finally
implies that $\lim_{\xi\to+\infty}\lambda_m[\xi]=B_0(2m+1)$.
 \end{proof}

Let us recall that the family of magnetic fields considered above
has a non-empty intersection with all the families studied earlier
in Ref.~\onlinecite{Iw_85,MaPu_97,ExKo_00, Tu_16} which, with the
exception of the last one, treat the original Iwatsuka model,
$W\equiv 0$. Hence we obtain a nontrivial extension of the known
results, with notably weak regularity assumptions comparing to the
other sources. Note also the assumption \eqref{ass:3} crucial for
the use of the minimax principle does not mean that the perturbation
$b$ of the constant magnetic field must be everywhere negative; it
may be sign-changing and negative on a compact set only.

%%%%%%%%%%%%%%%%%%%%%%%%%%%%%
\subsection*{Acknowledgments}

The research has been partially supported by the Czech Science
Foundation (GA\v{C}R) within the project No. 17-01706S.

%%%%%%%%%%%%%%%%%%%%%%%%%%%

\end{document}